\newtheoremstyle{mystyle}%
{3pt}
{3pt}
{\color{blue}}
{}
{\bfseries\color{blue}}
{.}
{.5em}
{}
\theoremstyle{plain}
\newtheorem{theorem}{Theorem}[section]
\newtheorem{prop}[theorem]{Proposition}
\newtheorem{proposition}[theorem]{Proposition}
\newtheorem{lemma}[theorem]{Lemma}
\newtheorem{claim}[theorem]{Claim}
\theoremstyle{definition}
\theoremstyle{remark}
\newtheorem{remark}[theorem]{Remark}
\theoremstyle{mystyle}
\mathchardef\mhyphen="2D
\newcommand{\cl}[1]{\mathscr{#1}}
\newcommand{\eat}[1]{}
\newcommand{\mso}{\ensuremath{\mathrm{MSO}}}
\newcommand{\fo}{\ensuremath{\mathrm{FO}}}
\newcommand{\hucw}{\ensuremath{\mathsf{HUCW}}}
\renewcommand{\top}[1]{\ensuremath{\mathsf{top}(#1)}}
\newcommand{\topsucc}[1]{\ensuremath{\mathsf{topsucc}(#1)}}
\newcommand{\bottom}[1]{\ensuremath{\mathsf{bottom}(#1)}}
\newcommand{\penult}[1]{\ensuremath{\mathsf{penult}(#1)}}
\newcommand{\prepenult}[1]{\ensuremath{\mathsf{prepenult}(#1)}}
\newcommand{\rlboundary}[1]{\ensuremath{\mathsf{H}\text{-}\mathsf{boundary}(#1)}}
\newcommand{\one}[1]{\ensuremath{\mathsf{one}(#1)}}
\newcommand{\odd}[1]{\ensuremath{\mathsf{odd}(#1)}}
\newcommand{\first}[1]{\ensuremath{\mathsf{first}(#1)}}
\newcommand{\last}[1]{\ensuremath{\mathsf{last}(#1)}}
\newcommand{\samecol}[1]{\ensuremath{\mathsf{samecolumn}(#1)}}
\newcommand{\domain}[1]{\ensuremath{\mathsf{domain}(#1)}}
\newcommand{\adjcol}[1]{\ensuremath{\mathsf{adjcolumn}(#1)}}
\newcommand{\linord}[1]{\ensuremath{\mathsf{linord}(#1)}}
\newcommand{\hedge}[1]{\ensuremath{\mathsf{H}\text{-}\textsf{edge}(#1)}}
\newcommand{\vedge}[1]{\ensuremath{\mathsf{V}\text{-}\textsf{edge}(#1)}}
\newcommand{\prepenultedge}[1]{\ensuremath{\mathsf{prepenultedge}(#1)}}
\newcommand{\col}[1]{
  \ifthenelse{\equal{#1}{-1}}{\ensuremath{\mathrm{C}}}{\ensuremath{\mathrm{C}_{#1}}}}
\newcommand{\row}[1]{
  \ifthenelse{\equal{#1}{-1}}{\ensuremath{\mathrm{R}}}{\ensuremath{\mathrm{R}_{#1}}}}
\newcommand{\mycolor}[2]{\ensuremath{\mathsf{Colour}_{#1}(#2)}}
\newcommand{\rhscol}[2]{\ensuremath{\mathsf{rhscolumn}[#1](#2)}}
\newcommand{\lessthan}[1]{\ensuremath{\mathsf{lessthan}(#1)}}
\newcommand{\powergraphs}{\ensuremath{\mathsf{Power}\text{-}\mathsf{graphs}}}
\newcommand{\ctwomso}{\ensuremath{\mathrm{C}_2\mso}}
\newsavebox\myboxA
\newsavebox\myboxB
\newlength\mylenA
\newcommand*\xoverline[2][0.75]{%
    \sbox{\myboxA}{$\m@th#2$}%
    \setbox\myboxB\null
    \ht\myboxB=\ht\myboxA%
    \dp\myboxB=\dp\myboxA%
    \wd\myboxB=#1\wd\myboxA
    \sbox\myboxB{$\m@th\overline{\copy\myboxB}$}
    \setlength\mylenA{\the\wd\myboxA}
    \addtolength\mylenA{-\the\wd\myboxB}%
    \ifdim\wd\myboxB<\wd\myboxA%
       \rlap{\hskip 0.5\mylenA\usebox\myboxB}{\usebox\myboxA}%
    \else
        \hskip -0.5\mylenA\rlap{\usebox\myboxA}{\hskip 0.5\mylenA\usebox\myboxB}%
    \fi}
  \newcommand*\xunderline[2][0.75]{%
    \sbox{\myboxA}{$\m@th#2$}%
    \setbox\myboxB\null
    \ht\myboxB=\ht\myboxA%
    \dp\myboxB=\dp\myboxA%
    \wd\myboxB=#1\wd\myboxA
    \sbox\myboxB{$\m@th\underline{\copy\myboxB}$}
    \setlength\mylenA{\the\wd\myboxA}
    \addtolength\mylenA{-\the\wd\myboxB}%
    \ifdim\wd\myboxB<\wd\myboxA%
       \rlap{\hskip 0.5\mylenA\usebox\myboxB}{\usebox\myboxA}%
    \else
        \hskip -0.5\mylenA\rlap{\usebox\myboxA}{\hskip 0.5\mylenA\usebox\myboxB}%
    \fi}
\newcommand{\cw}{\text{cwd}}
\newcommand{\tw}{\text{twd}}
\renewcommand{\bigcupdot}{\biguplus}
\newcommand{\tbf}[1]{\textbf{#1}}
\newcommand{\mc}[1]{\mathcal{#1}}
\newcommand{\true}{\ensuremath{\mathsf{True}}}
\begin{document}

\title{\tbf{MSO Undecidability for Hereditary Classes of Unbounded Clique-Width}}
\author[1]{Anuj Dawar} 
\author[2]{Abhisekh Sankaran}
\affil[1]{Department of Computer Science and Technology, \authorcr University of Cambridge, U.K. \authorcr \texttt{anuj.dawar@cl.cam.ac.uk}}
\affil[ ]{\phantom{text}}
\affil[2]{Tata Consultancy Services Ltd., Mumbai, India \authorcr \texttt{abhisekh.sankaran@tcs.com}}
\date{}                     
\renewcommand\Affilfont{\small}
\maketitle

\begin{abstract}
Seese's conjecture for finite graphs states that monadic second-order logic (MSO) is undecidable on all graph classes of unbounded clique-width.  We show that to establish this it would suffice to show that grids of unbounded size can be interpreted in two families of graph classes: minimal hereditary classes of unbounded clique-width; and antichains of unbounded clique-width under the induced subgraph relation.  We explore all the currently known classes of the former category and establish that grids of unbounded size can indeed be interpreted in them.
\end{abstract}


\section{Introduction}\label{section:intro}

The monadic second-order logic ($\mso$) of graphs has been an object of intensive research for many years now.  It is a logic that is highly expressive and yet very well behaved on many interesting classes of graphs.  It has enabled the extension of many automata-theoretic and algebraic techniques to the construction of algorithms on graphs (see the comprehensive treatment in~\cite{courcelle-engelfriet}).  It has become a reference logic against which many others are compared.  A key area of investigation is determining on which classes of graphs  $\mso$ is algorithmically well-behaved.

The good algorithmic behaviour of $\mso$ on a class $\cl{C}$ of graphs is usually taken to mean one of two things: the evaluation (or model-checking) problem for $\mso$ sentences on $\cl{C}$ is tractable; or the satisfiability problem of $\mso$ sentences on $\cl{C}$ is decidable.  Usually, these two are linked.  Broadly speaking, the only way we know to show that the $\mso$ theory of a class $\cl{C}$ is decidable is to show that $\cl{C}$ can be obtained by means of an $\mso$ interpretation from a class of trees, which itself has a decidable theory and this also yields efficient evaluation algorithms for $\mso$ sentences on $\cl{C}$.  And the only way we know to show that the $\mso$ theory of $\cl{C}$ is undecidable is to show that there is an $\mso$ interpretation that yields arbitrarily large grids on $\cl{C}$ and, often this also yields an obstacle to the tractability of $\mso$ evaluation on $\cl{C}$.  There are exceptions to the latter in pathological cases (for instance, if the interpreted grid is much smaller than the structure in which it is interpreted) but~\cite{Kreutzer09} provides a fairly general instance of the rule.

Seese~\cite{Seese} formalizes the first of these observations into a conjecture: if the $\mso$ theory of a class $\cl{C}$ is decidable, there is an $\mso$ interpretation $\Psi$ and a class $\cl{T}$ of trees such that $\Psi$ maps $\cl{T}$ to $\cl{C}$.  This  remains an open question nearly three decades after it was first posed despite considerable research effort around it.  By a theorem of Courcelle and Engelfriet~\cite{courcelle-engelfriet}, it is known that the classes of graphs obtained by $\mso$ interpretations from trees are exactly those of bounded clique-width.  Thus, Seese's conjecture can be understood as saying that any class of graphs of unbounded clique-width has an undecidable $\mso$ theory.  If we similarly formalize the second observation above about grids and combine it with this, we can formulate the following stronger conjecture, which we refer to below as the \emph{strong Seese conjecture}: every class $\cl{C}$ of graphs of unbounded clique-width admits an $\mso$ interpretation that defines arbitrarily large grids.  Seese's conjecture is often formulated in this stronger form as it seems the only reasonable route to proving it. 
It can be seen as an interesting analogue of the Robertson-Seymour grid minor theorem to the effect that any class of graphs of unbounded treewidth admits arbitrarily large grids as minors.

In recent years there has been growing interest in clique-width as a measure of the complexity of graphs from a structural and algorithmic point of view, quite separate from questions of logic~\cite{courcelle-olariu, corneil-rotics, oum-seymour, GH011}.  In particular, it provides a route for extending algorithmic methods that have had great success on sparse graph classes~\cite{Sparsity} to more general classes of graphs.  A class of graphs may be of bounded clique-width while containing dense graphs---the classic example being the class of cliques.

In the context of the structural study of classes of bounded clique-width, there is particular interest in \emph{hereditary classes}, that is, classes of graphs closed under the operation of taking induced subgraphs.  This is because the induced subgraph relation behaves well with respect to clique-width.  If a graph $H$ is a subgraph or a minor of a graph $G$, the clique-width of $H$ can be greater than that of $G$ but if $H$ is an \emph{induced subgraph} of $G$, then the clique-width of $H$ is no more than that of $G$.  Hence, the hereditary closure of a class $\cl{C}$ of bounded clique-width still has bounded clique-width.

The induced subgraph relation is not as well-behaved as the graph minor relation.  By the Robertson-Seymour graph minor theorem~\cite{RobertsonS-GM20}, the graph minor relation is a well-quasi-order.  This is not true for the induced subgraph relation.
By the same token, the classes of graphs of unbounded treewidth are well understood  in that they are precisely the classes which have grid minors of unbounded size.  The picture for classes of graphs of unbounded clique-width is somewhat less clear.  In particular, the relationship between a class having unbounded clique-width and admitting a well-quasi-order of the induced subgraph relation has been the subject of much investigation.   It is possible to construct as we see below, infinite descending chains, under inclusion, of hereditary classes of graphs, each of unbounded clique-width.

Lozin~\cite{Lozin11} identified the first example of a  hereditary class $\cl{C}$  of graphs of unbounded clique-width that are \emph{minimal} with this property---that is, no hereditary proper subclass of $\cl{C}$ has unbounded clique-width.  Since then, many other such classes have been constructed.  Collins et al.~\cite{CollinsFKLZ18} show how to obtain an infinite family of such classes.  Their construction has been recently extended by Brignall and Cocks~\cite{BC22} to obtain uncountably many examples.  Atminas et al.~\cite{ABLS15} construct instances of such classes which are characterized by a finite collection of forbidden induced subgraphs.  Lozin et al.~\cite{LRZ15} construct a minimal hereditary class of unbounded clique-width that is well-quasi-ordered under the induced subgraph relation.

This exploration of novel classes of unbounded clique-width also suggests an approach to establishing Seese's conjecture for 
finite graphs.  We establish in Section~\ref{sec:wqo-minimal} that a proof of Seese's conjecture would follow from the conjunction of the following two 
statements: (1) every collection of graphs of unbounded clique-width that forms an infinite anti-chain under the induced subgraph 
relation interprets arbitrarily large grids; and (2) every minimal hereditary class of unbounded clique-width interprets arbitrarily 
large grids.  This suggests a programme to establish Seese's conjecture by systematically studying antichains and minimal hereditary 
classes of unbounded clique-width.  We do not yet know of a complete classification of minimal hereditary classes of unbounded 
clique-width, which makes a systematic approach to this programme challenging.  Nevertheless, we examine in Sections~\ref{sec:grid-classes}--\ref{section:power-graphs} all known classes satisfying these conditions and show that in all cases we can indeed interpret grids of unbounded size.  Thus none of these provides a counterexample to Seese's conjecture.
Our construction shows a uniform method of proving that these classes have unbounded clique-width.  The proof is often simpler than the \emph{ad hoc} methods by which this was proved for each class in the literature.

It is worth mentioning some significant lines of investigation related to Seese's conjecture.  Courcelle~\cite{courcelle} shows that 
proving Seese's conjecture for finite graphs is equivalent to proving the relativized version of the conjecture for particular classes 
of graphs, two examples being bipartite graphs and split graphs.  He further shows the conjecture to be true when relativized to 
uniformly $k$-sparse graphs and interval graphs.  Another line of work addresses variants of Seese's conjecture obtained by considering 
logics other than $\mso$.  One such result by Seese~\cite{Seese} shows that guarded second-order logic (GSO) is undecidable on any 
class of unbounded clique-width.  Similarly, Courcelle and Oum~\cite{courcelle-oum} show that the extension $\ctwomso$ of $\mso$ 
obtained by considering modulo 2 counting quantifiers is also undecidable on classes of unbounded clique-width. In all of these cases, 
the proof goes via interpreting grids in unbounded clique-width classes.  There has also been interesting progress looking at Seese's 
conjecture for structures other than graphs.  A significant paper here is by Hlin{\v{e}}n{\'y} and Seese~\cite{HS06} who show the 
conjecture to be true for matroids representable over any finite field.

\section{Preliminaries}\label{section:prelims}
The graphs we consider in this paper are simple, undirected and loop-free. For a graph $G$, we write $V(G)$ for the vertices of $G$ and $E(G)$ for the edges. A graph $H$ is an \emph{induced subgraph} of $G$ if $V(H) \subseteq V(G)$ and for any $x,y \in V(H)$, $\{x,y\} \in E(H)$ if, and only if, $\{x,y\} \in E(G)$.  We write $H \subseteq G$ to denote that~$H$ is an induced subgraph of $G$. A class of graphs is said to be \emph{hereditary} if it is closed under induced subgraphs.  For any graph class $\cl{C}$, we write $\cl{C}\!\downarrow$ to denote the hereditary closure of $\cl{C}$ -- that is, the class of graphs that are induced subgraphs of the graphs in $\cl{C}$.  We consider monadic second-order logic ($\mso$) over vocabularies $\tau$ containing the binary relation $E$ and finitely many unary relation symbols.  A $\tau$-structure can be thought of as an \emph{expansion} of a graph $G = (V,E)$ with unary relations that interpret the unary symbols in~$\tau$.  Such a structure can be thought of as a vertex-coloured graph.
An $\mso$ formula over the vocabulary $\tau$ is an expression that is inductively constructed from atomic $\mso$ formulae using the Boolean connectives $\wedge, \vee,$ and $\neg$, and existential quantification over vertex variables and set variables.  Here an atomic $\mso$ formula is an expression of the form $E(x, y)$ or $Q(x)$ or $X(y)$ or $x = y$ where~$x,~y$ are vertex variables, the predicates $E, Q$ belong to $\tau$ and $X$ is a set variable. A \emph{first order}, or $\fo$, formula is an $\mso$ formula that does not contain any set variable.  We often write $\varphi(\bar{x}, \bar{X})$ to denote a formula whose free variables are among $\bar{x}$ and $\bar{X}$, the former being a tuple of vertex variables and the latter a tuple of set variables.  Given such a formula, and a graph $G$ along with a tuple $\bar{a}$ of vertices that interprets~$\bar{x}$ and a tuple $\bar{A}$ of unary relations that interprets~$\bar{X}$, we write $(G,\bar{A}) \models \phi[\bar{a}]$ to denote that the formula $\phi$ is satisfied in~$G$ in this interpretation.

Given a graph $G$ and an $\mso$ formula $\varphi(\bar{x}, \bar{X})$ where the length of $\bar{x}$ is~$k$, we can think of $\varphi$ as defining a $k$-ary relation on an expansion of $G$ with an interpretation $\bar{A}$ of $\bar{X}$. Specifically this relation, denoted $\varphi^{(G, \bar{A})}$, is given by $\varphi^{(G, \bar{A})} = \{\bar{a} \mid (G,\bar{A}) \models \varphi[\bar{a}]\}$. Given a sequence $\bar{Z}$ of set variables, an \emph{$\mso$ graph interpretation with parameters $\bar{Z}$} is a pair $\Psi(\bar{Z}) = (\psi_V(x, \bar{Z}), \psi_E(x, y, \bar{Z}))$ of $\mso$ formulas over the vocabulary $\{E\} \cup \{Z_i \mid Z_i~\text{is an element of}~\bar{Z}\}$.  Given a graph $G$ together with unary relations $\bar{A}$ interpreting the set variables $\bar{Z}$ in $G$, the interpretation $\Psi(\bar{Z})$ defines a possibly directed graph $H = \Psi(G, \bar{A})$.  This graph has (i) vertex set $\psi_V^{(G, \bar{A})}$, and (ii) edge set  $\psi_E^{(G, \bar{A})}$. In this paper, we are only interested in the case where $\Psi(\bar{Z})$ defines an undirected graph (that is,~$\psi_E(x, y, \bar{Z})$ defines an irreflexive and symmetric binary relation). Thus $\Psi(\bar{Z})$ defines a function from the expansions of graphs with $|\bar{Z}|$ unary predicates, to graphs, and therefore in general defines a relation on graphs.  Where it causes no confusion, we also refer to the relation defined by an interpretation as an $\mso$ interpretation.
If $\bar{Z}$ is empty, we call the interpretation $\Psi$ \emph{parameterless}, and such a $\Psi$ defines a function from graphs to graphs.
An example of a parameterless interpretation is $\Theta = (\theta_V(x), \theta_E(x, y))$ where $\theta_V(x) := (x = x)$ and $\theta_E(x, y) :=  \neg E(x, y)$; the function it defines maps a graph to its complement.  An example of an interpretation with parameters is $\Gamma(Z) = (\gamma_V(x, Z), \gamma_E(x, y, Z))$ where $\gamma_V(x, Z) := Z(x)$ and $\gamma_E(x, y, Z) := E(x, y)$. The function that it defines maps an expansion $(G, A)$ of a graph $G$ to the subgraph of $G$ induced by $A$; thus the relation on graphs that $\Gamma(Z)$ defines maps a graph to all its induced subgraphs.
Given a class $\cl{C}$ of graphs and an interpretation $\Psi$ with parameters~$\bar{Z}$, we denote by $\Psi(\cl{C})$ the class of graphs given by $\Psi(\cl{C}) = \{\Psi(G, \bar{A}) \mid G \in~\cl{C}~\mbox{and}~\bar{A}~\mbox{is an interpretation of}~\bar{Z}~\mbox{in}~G\}$.  For example, for the interpretation $\Gamma$ above and a class $\cl{C}$ of graphs, the class $\Gamma(\cl{C})$ is exactly the hereditary closure of $\cl{C}$. Since they are relations, one can compose interpretations and it is known that the class of $\mso$ interpretations is closed under composition~\cite{hodges}. We call  $\mso$ interpretations with parameters simply $\mso$ interpretations for ease of readability, and denote them with the uppercase Greek letters $\Phi, \Gamma, \Psi, \Theta$, etc.

The notion of clique-width is a structural parameter of graphs that was introduced by Courcelle, Engelfriet and Rozenberg in~\cite{cer93} as a generalization of the well-known notion of treewidth.  We do not give the definitions of clique-width and treewidth here as we need only specific properties of these for our results that we state below; we point the reader to~\cite{courcelle-engelfriet, Sparsity} for more about the notions and results concerning them. We write $\cw(G)$ and $\tw(G)$ for the clique-width and tree-width of a graph $G$,  respectively.  As examples, a clique has clique-width 1, and a cograph has clique-width 2.  It is known for any graph $G$ that $\cw(G) \leq 4 \cdot 2^{\tw(G)-1} + 1$~\cite{courcelle-olariu} and for planar $G$ we even have $\cw(G) \leq 6\tw(G) -2$~\cite{courcelle2018}. A class of graphs is said to  have \emph{bounded} clique-width if for some number $k \ge 1$, every graph in the class has clique-width at most~$k$.  Thus, the class of cliques, the class of  cographs and all classes of bounded treewidth have bounded clique-width. A graph class has \emph{unbounded} clique-width if it does not have bounded clique-width. Examples of graph classes of unbounded clique-width include grids, interval graphs, and line graphs~\cite{courcelle}.

The class of all graphs of clique-width at most $k$ is hereditary since the clique-width of an induced subgraph of $G$ is never more than the clique-width of $G$.  An \emph{antichain} under the induced subgraph relation is a set $\mc{A}$ of graphs such that if $G$ and $H$ are distinct graphs in $\mc{A}$, then neither of $G \subseteq H$ or $H \subseteq G$ holds.  Usually when we say ``antichain'' without further qualification, we mean an antichain under the induced subgraph relation.
A graph class $\cl{C}$ is said to be \emph{well-quasi-ordered} (WQO) under induced subgraphs if it does not contain any infinite antichains. For example, the class of all cliques is WQO under induced subgraphs.

The $\mso$ theory of a graph class $\cl{C}$ is the class of all $\mso$ sentences that are true in all graphs of $\cl{C}$.  This theory is decidable if, and only if, the following problem is decidable: given an $\mso$ sentence $\phi$ decide if $\phi$ is true in some graph in $\cl{C}$.
Seese's conjecture states any class whose $\mso$ theory is decidable has bounded clique-width.  An $m \times n$ grid is a graph $G = (V, E)$ on $m \cdot n$ vertices with $V = \{(i, j) \mid 1 \leq i \leq m, 1 \leq j \leq n\}$ and $E = \{ \{(i, j), (i, j+1)\} \mid 1 \leq i \leq m, 1 \leq j < n\}  \cup \{ \{(i, j), (i+1, j)\} \mid 1 \leq i < m, 1 \leq j \leq n\}$. The grid is \emph{square} if $m = n$. We say a class $\cl{C}$ of graphs \emph{interprets grids} via an $\mso$ interpretation $\Phi$, if~$\Phi(\cl{C})$ contains graphs isomorphic to arbitrarily large square grids.  Any class of graphs that contains arbitrarily large grids has undecidable $\mso$ theory~\cite[Thm.~5.6]{courcelle-engelfriet}.  Morover, since $\mso$ decidability is preserved by interpretations~\cite[Thm.~7.54]{courcelle-engelfriet}, any class of graphs that interprets grids via an $\mso$ interpretation has an undecidable $\mso$ theory.  The \emph{strong Seese conjecture} is that any class of unbounded clique-width interprets grids via an $\mso$ interpretation.
It is known that if the clique-width of a class $\cl{C}$ is bounded and $\Phi$ is an $\mso$ interpretation, then the clique-width of $\Phi(\cl{C})$ is also bounded~\cite[Cor.~7.38]{courcelle-engelfriet}. A simple observation about classes interpreting grids is the following.
\begin{proposition}\label{prop:reduction-helper}
 Suppose $\cl{C}$ is a graph class that interprets grids, and $\cl{D}$ is a graph class for which there exists an MSO interpretation $\Xi$  such that the hereditary closure of $\Xi(\cl{D})$ contains $\cl{C}$. Then $\cl{D}$ interprets grids as well. 
\end{proposition}

Specifically, if $\Theta$ is the interpretation mapping $\cl{C}$ to a class containing arbitrarily large grids, and $\Gamma$ is the interpretation defined above taking any class to its hereditary closure, then 
an interpretation $\Omega$ such that $\Omega(\cl{D})$ contains arbitrarily large square grids, is given by $\Omega = \Theta \circ \Gamma \circ \Xi$ (viewing $\Theta, \Gamma$ and $\Xi$ as functions) where $\circ$ denotes composition.

We say that a class of graphs $\cl{C}$ is $\hucw$ if it is hereditary and has unbounded clique-width.  An $\hucw$ graph class is said to be  \emph{minimal} if it does not contain a proper subclass that is $\hucw$.  For example, bipartite permutation graphs and unit interval graphs are two minimal $\hucw$ graph classes~\cite{Lozin11}.  The existence of countably many minimal $\hucw$ classes is established in~\cite{CollinsFKLZ18}, and this has been recently extended to uncountably many minimal $\hucw$ classes in~\cite{BC22}.

\section{Minimal Classes and Well-Quasi-Ordering}\label{sec:wqo-minimal}

In this section we lay out an approach to studying Seese's conjecture that motivates our study of $\mso$ decidability for minimal $\hucw$ classes. The first observation is that, if $\cl{C}$ is a counter-example to Seese's conjecture, then so is~${\cl{C}\!\downarrow}$.    Recall that a counter-example to Seese's conjecture would be a class~$\cl{C}$ that has unbounded clique-width and a decidable $\mso$ theory.  Clearly if~$\cl{C}$ has unbounded clique-width, then so does  ${\cl{C}\!\downarrow}$.  The following proposition is folklore.  It follows immediately from the fact that $\mso$ decidability is preserved by interpretations and the existence of the interpretation $\Gamma$ defined above which takes a class to its hereditary closure.

\begin{prop}\label{prop:hereditary}
 If the $\mso$ theory of  $\cl{C}$ is decidable, then so is the $\mso$ theory of ${\cl{C}\!\downarrow}$.
\end{prop}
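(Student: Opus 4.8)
The plan is to reduce the membership problem for the $\mso$ theory of ${\cl{C}\!\downarrow}$ to that of $\cl{C}$ by a computable translation of sentences, and then invoke the decision procedure for the latter. The key point is that ``$H$ is an induced subgraph of some member of $\cl{C}$'' is captured by the parametrized induced-subgraph interpretation $\Psi(Z) = (\psi(x, Z), \psi_E(x, y, Z))$ with $\psi(x, Z) := Z(x)$ and $\psi_E(x, y, Z) := E(x, y)$ exhibited in Section~\ref{section:prelims}: for every graph $G$ and every $A \subseteq V(G)$, the graph $\Psi((G, A))$ is exactly the subgraph of $G$ induced by $A$, and $H \in {\cl{C}\!\downarrow}$ if and only if $H$ is isomorphic to $\Psi((G, A))$ for some $G \in \cl{C}$ and some $A \subseteq V(G)$.

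First I would invoke the standard backwards-translation property of $\mso$ interpretations: for the interpretation $\Psi(Z)$ above and any $\mso$ sentence $\varphi$ over the graph vocabulary, one can effectively construct an $\mso$ formula $\varphi^{\Psi}(Z)$ with a single free set variable $Z$ such that for every graph $G$ and every $A \subseteq V(G)$,
\[
\Psi((G, A)) \models \varphi \quad \Longleftrightarrow \quad G \models \varphi^{\Psi}(A).
\]
Since $\psi$ and $\psi_E$ are the trivial formulas $Z(x)$ and $E(x,y)$, the formula $\varphi^{\Psi}(Z)$ is nothing more than the relativization of $\varphi$ to $Z$: each vertex quantifier $\exists x\,(\cdots)$ becomes $\exists x\,(Z(x) \wedge \cdots)$, each set quantifier $\exists X\,(\cdots)$ becomes $\exists X\,((X \subseteq Z) \wedge \cdots)$, and atomic formulas are left untouched; this construction is plainly effective, and the equivalence is proved by a routine induction on $\varphi$. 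Next I would chain equivalences: $\varphi$ lies in the $\mso$ theory of ${\cl{C}\!\downarrow}$ iff $H \models \varphi$ for every induced subgraph $H$ of every $G \in \cl{C}$, iff $G \models \forall Z\, \varphi^{\Psi}(Z)$ for every $G \in \cl{C}$, iff the single $\mso$ sentence $\forall Z\, \varphi^{\Psi}(Z)$ lies in the $\mso$ theory of $\cl{C}$. Thus $\varphi \mapsto \forall Z\, \varphi^{\Psi}(Z)$ is a computable reduction of membership in the $\mso$ theory of ${\cl{C}\!\downarrow}$ to membership in the $\mso$ theory of $\cl{C}$; since the latter is decidable by hypothesis, so is the former.

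There is no deep obstacle here; the care is entirely in routine bookkeeping. The one point to watch is the relativization of set quantifiers, which must restrict to subsets of $Z$ (the clause $X \subseteq Z$, itself $\mso$-definable) rather than to arbitrary subsets of the universe, so that quantification in $\varphi^{\Psi}$ genuinely mirrors quantification in the induced subgraph. The other subtlety is the degenerate case $A = \emptyset$: depending on whether one's conventions admit the empty graph, one either allows it (in which case the reduction above already accounts for it) or excludes it, in which case one restricts $Z$ to be nonempty in the translation and separately decides whether $\varphi$ holds in the empty graph, which is trivially decidable. With these caveats addressed, the inductive verification of the relativization equivalence and the effectiveness of the translation complete the argument. \epf
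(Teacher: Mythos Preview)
Your proposal is correct and is essentially the same argument as the paper's: both reduce membership of $\varphi$ in the $\mso$ theory of ${\cl{C}\!\downarrow}$ to membership of $\forall Z\,\varphi^{Z}(Z)$ in the $\mso$ theory of $\cl{C}$ via relativization to a set variable. You frame the relativization as the backwards translation of the induced-subgraph interpretation (and add a remark on the empty-graph edge case), but the content is identical.
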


Hence, if there is a counter-example to Seese's conjecture, we have one that is a hereditary class of unbounded clique-width, i.e.\ an $\hucw$ class.  In the present section, we establish some basic facts about the $\hucw$ classes that allow us to structure the search for such a counter-example, or indeed the attempt to show that there is none.  

The relation of being an induced subgraph is not a well-quasi-order as it admits infinite anti-chains.  As an example, let $I_n$ be the graph on $n+4$ vertices $e_0,e_1,e_2,e_3,c_1,\ldots,c_n$ where for each $i < n$ there is an edge between $c_i$ and $c_{i+1}$, and in addition we have edges $e_0-c_1$, $e_1-c_1$, $e_2-c_n$ and $e_3-c_n$.  In short, there is a path of length $n$ with two additional vertices at each end to mark the ends.  Then, it is clear the collection $(I_n)_{n\in \mathbb{N}}$ is an antichain in the induced subgraph order.  This particular antichain has bounded clique-width.   It is also possible to construct antichains of unbounded clique-width (which therefore must be infinite).  An example is obtained by taking the collection of $n \times n$ grids and adding two extra vertices at each corner to form a triangle.  In what follows, whenever we refer to an \emph{antichain} we mean one under the induced subgraph relation.

From an  antichain of unbounded clique-width, it is possible to construct (as we show below) an infinite descending chain of classes of graphs (under the inclusion relation) all of which are $\hucw$.  Thus, it was a significant discovery to find that there are actually $\hucw$ classes $\cl{C}$ that are \emph{minimal}: no proper hereditary subclass of $\cl{C}$ has unbounded clique-width.  The first such example is due to Lozin~\cite{Lozin11}.  Collins et al.~\cite{CollinsFKLZ18} constructed an infinite family of such classes and Lozin et al.~\cite{LRZ15}  give an example that is itself well-quasi-ordered under the induced substructure relation.  We examine these in some detail in subsequent sections.

If it were the case that every class that is $\hucw$ contains as a subclass a minimal $\hucw$ class, then showing that every minimal $\hucw$ class interprets grids would suffice to prove Seese's conjecture.  Indeed, if $\cl{C}$ interprets grids of unbounded size, so does every class that contains $\cl{C}$.
However, Korpelainen has shown~\cite{Korpelainen16} that there are $\hucw$ classes that contain no minimal $\hucw$ subclass.  We give a construction of such a class  in Section~\ref{sec:non-minimal}.
This is linked to the existence of antichains of unbounded clique-width.  Specifically, we establish the following facts.
\begin{enumerate}
\item If $\cl{C}$ is a minimal $\hucw$ class, then it cannot contain an antichain of unbounded clique-width (Theorem~\ref{thm:minimal-antichain} in Section~\ref{sec:antichain}).
\item If $\cl{C}$ is an $\hucw$ class that contains no minimal class, it must contain an antichain of unbounded clique-width (Theorem~\ref{thm:non-minimal-antichain} in Section~\ref{sec:antichain}).
\end{enumerate}

From these, the theorem below follows, which suggests a programme for proving Seese's conjecture.
\begin{theorem}\label{thm:strong-seese}
  The strong Seese conjecture holds if, and only if, both of the following are true:
  \begin{enumerate}
  \item every antichain of unbounded clique-width interprets grids; and 
  \item every minimal $\hucw$ class interprets grids.
  \end{enumerate}
\end{theorem}

\subsection{Antichains and Minimal Classes}\label{sec:antichain}
We first establish the relationship between the existence of antichains of unbounded clique-width and the minimality of $\hucw$ classes.  These are established in Theorems~\ref{thm:minimal-antichain} and~\ref{thm:non-minimal-antichain}.

We say that a sequence $(\cl{C}_i)_{i \in \omega}$ is an infinite
\emph{strictly descending $\hucw$-chain} if for each $i$,  $\cl{C}_i$ is an $\hucw$ class and $\cl{C}_{i+1}$ is a proper subclass of $\cl{C}_i$.  We say that $\cl{C}$ contains an infinite strictly descending $\hucw$-chain if there is such a chain with $\cl{C}_i \subseteq \cl{C}$ for all $i$.

\begin{lemma}\label{lemma:helper-main}
  The following are equivalent:
  \begin{enumerate}[nosep]
    \item \vspace{2pt} $\cl{C}$ contains an infinite strictly
      descending $\hucw$-chain whose intersection is a class of
      bounded clique-width.\label{helper-main:1}
    \item \vspace{2pt} $\cl{C}$ contains an infinite strictly
      descending $\hucw$-chain whose intersection is empty.\label{helper-main:2}
    \item \vspace{2pt} $\cl{C}$ contains an antichain of
      unbounded clique-width.\label{helper-main:3}
  \end{enumerate}
\end{lemma}

\begin{proof}
  (\ref{helper-main:3}) $\rightarrow$ (\ref{helper-main:2}): If
  $\{G_1, G_2, \ldots\}$ is such an antichain, then let $\cl{C}_i$ be
  the hereditary closure of $\{G_i, G_{i+1}, \ldots\}$ for $i \ge 1$. Then $\cl{C}_1
  \supsetneq \cl{C}_2 \supsetneq \ldots$ is an infinite strictly
  descending $\hucw$-chain whose intersection is empty.

  \vspace{2pt}(\ref{helper-main:2}) $\rightarrow$ (\ref{helper-main:1}): Trivial since the empty class has clique-width 0. 
  
  \vspace{2pt}(\ref{helper-main:1}) $\rightarrow$
  (\ref{helper-main:3}): Let $\cl{C}_1 \supsetneq \cl{C}_2 \supsetneq
  \ldots$ be such a descending $\hucw$-chain and $\cl{C}_\omega =
  \bigcap_{i \ge 1} \cl{C}_i$. Let $\cl{D}_i = \cl{C}_i \setminus
  \cl{C}_{i+1}$ for $i \ge 1$. Then for $1 \leq i < j$, we have
  $\cl{D}_i \cap \cl{C}_j = \emptyset$; hence $\cl{D}_i \cap \cl{D}_j
  = \cl{D}_i \cap \cl{C}_\omega = \emptyset$. Further, $\cl{C}_i =
  \big(\bigcupdot_{i \leq k < \omega} \cl{D}_k \big) \bigcupdot
  \cl{C}_{\omega}$. 

  \begin{claim}\label{claim:helper}
    The following are true:
    \begin{enumerate}[nosep]
      \item \vspace{2pt} For $1 \leq i < j$, no graph in $\cl{D}_i$ is
        an induced subgraph of a graph in
        $\cl{D}_j$.\label{helper-claim:1}
      \item \vspace{2pt} For $i \ge 1$, for every graph $G \in
        \cl{D}_i$, there exists a number $f(G) > i$ such that for all $j \ge f(G)$, no graph
        in $\cl{C}_{j} \setminus \cl{C}_\omega$ is an induced
        subgraph of $G$.\label{helper-claim:2}
    \end{enumerate}
  \end{claim}
  \begin{proof}
    (\ref{helper-claim:1}): If $G \subseteq H$ for some $G \in
    \cl{D}_i$ and $H \in \cl{D}_j$, then since $\cl{D}_j \subseteq
    \cl{C}_j$ and $\cl{C}_j$ is hereditary, we would have $G \in
    \cl{C}_j$; but that contradicts the fact that $\cl{D}_i \cap
    \cl{C}_j = \emptyset$.

    (\ref{helper-claim:2}): Let $H_1, \ldots, H_r$ be an enumeration
    of the induced subgraphs of $G$ that are not in $\cl{C}_\omega$ --
    clearly $r$ is finite since $G$ is finite. Since $\cl{C}_i =
    \big(\bigcupdot_{i \leq j < \omega} \cl{D}_j \big) \bigcupdot
    \cl{C}_\omega$, there exist numbers $j_1, \ldots, j_r \in [i,
      \omega)$ such that $H_i \in \cl{D}_{j_i}$ for $i \in \{1,
      \ldots, r\}$. It then follows by the properties of the
      $\cl{D}_i$'s above that $f(G) = \max\{j_i \mid 1 \leq i \leq r\}
      + 1$ is indeed as desired.
  \end{proof}
  
  We now use the above claim to inductively construct an antichain
  of $\cl{C}$ of unbounded clique-width. Let $G_0$ be a graph in $\cl{D}_0$. Assume that we have constructed graphs $G_0, \ldots, G_{i}$ for $i \ge 0$ such that (i) $G_j \in \cl{D}_{l_j}$ and $l_j > l_{j - 1}$ for $1 \leq j \leq i$; (ii) $\{G_0, \ldots, G_{i}\}$ is an antichain; and (iii) the clique-width of $G_j$ is strictly greater than that of $G_{j-1}$ for $1 \leq j \leq i$. Let $k = \max \{f(G_j) \mid 1 \leq j \leq i\} > l_i$ where $f$ is as in Claim~\ref{claim:helper}. Consider the
  class $\cl{C}_k \setminus \cl{C}_\omega$ -- by
  Lemma~\ref{claim:helper}, all graphs in this class are incomparable
  with each of $G_0, \ldots, G_i$ in the induced subgraph order. Further, since $\cl{C}_k$
  has unbounded clique-width while $\cl{C}_\omega$ has bounded clique
  width, we have that $\cl{C}_k \setminus \cl{C}_\omega$ has unbounded
  clique-width, whereby there exists $G_{i+1} \in \cl{C}_k \setminus
  \cl{C}_\omega$ such that $G_{i+1}$ has clique-width greater than that of
  $G_i$. Let $l_{i+1} \ge k > l_i$ be such that  $G_{i+1} \in \cl{D}_{l_{i+1}}$. Then we see that $G_{i+1}$ is indeed as desired to complete the induction.
\end{proof}

We are now ready to prove the two results linking minimality of $\hucw$ classes and the existence of antichains of unbounded clique-width.
\begin{theorem}\label{thm:minimal-antichain} 
  If $\cl{C}$ is a minimal $\hucw$ class, then $\cl{C}$ does not contain an antichain of unbounded clique-width.
\end{theorem}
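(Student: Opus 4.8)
The plan is to argue by contradiction, using Lemma~\ref{lemma:helper-main} to convert a hypothetical antichain of unbounded clique-width inside $\cl{C}$ into an infinite strictly descending $\hucw$-chain sitting inside $\cl{C}$, and then to observe that the minimality of $\cl{C}$ rules out any such chain. No new combinatorics is needed beyond what Lemma~\ref{lemma:helper-main} already supplies; the whole point is a short bookkeeping argument.

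Concretely, suppose for contradiction that $\cl{C}$ is minimal $\hucw$ but contains an antichain $\mc{A} = \{G_1, G_2, \ldots\}$ of unbounded clique-width. By the implication (\ref{helper-main:2}) $\rightarrow$ (\ref{helper-main:1}) of Lemma~\ref{lemma:helper-main} — the easy direction, whose proof simply takes $\cl{C}_i$ to be the hereditary closure of $\{G_i, G_{i+1}, \ldots\}$ — we obtain an infinite strictly descending $\hucw$-chain $\cl{C}_1 \supsetneq \cl{C}_2 \supsetneq \cdots$ with $\cl{C}_i \subseteq \cl{C}$ for every $i$; here $\cl{C}_i \subseteq \cl{C}$ holds precisely because $\cl{C}$ is hereditary and contains every $G_j$. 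Each $\cl{C}_i$ is, by definition of a strictly descending $\hucw$-chain, a hereditary class of unbounded clique-width, i.e.\ an $\hucw$ class, and it is a subclass of $\cl{C}$. Since $\cl{C}$ is minimal $\hucw$, it contains no proper $\hucw$ subclass, and therefore $\cl{C}_i = \cl{C}$ for every $i$. But then $\cl{C} = \cl{C}_1 \supsetneq \cl{C}_2 = \cl{C}$, which is absurd. Hence $\cl{C}$ contains no antichain of unbounded clique-width.

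I do not expect any genuine obstacle here: the only thing to be careful about is invoking Lemma~\ref{lemma:helper-main} in the correct direction and noting that the chain produced from the tails of an antichain of $\cl{C}$ genuinely consists of subclasses of $\cl{C}$, so that minimality can be applied to each member of the chain. All of the substantive work — in particular the converse implication of Lemma~\ref{lemma:helper-main}, which is the hard one — is not used in this proof at all.
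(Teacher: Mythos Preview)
Your proof is correct and follows essentially the same approach as the paper: invoke the easy direction of Lemma~\ref{lemma:helper-main} to turn the antichain into an infinite strictly descending $\hucw$-chain inside $\cl{C}$, and observe that this contradicts minimality. The paper's version is more terse (it simply notes that the chain yields a proper $\hucw$ subclass), but the content is identical.
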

\begin{proof}
  If $\cl{C}$ contains an antichain of unbounded clique-width, then by Lemma~\ref{lemma:helper-main}, we have that $\cl{C}$ contains an infinite strictly descending $\hucw$-chain, and hence in particular a proper subclass that is $\hucw$.  Hence $\cl{C}$ is not minimal.
\end{proof}

\begin{theorem}\label{thm:non-minimal-antichain}
  If $\cl{C}$ is $\hucw$ and does not contain a minimal $\hucw$ class, then there exists in $\cl{C}$ an antichain of unbounded clique-width.
\end{theorem}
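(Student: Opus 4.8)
The plan is to reduce the statement to Lemma~\ref{lemma:helper-main} and then construct the required chain by a transfinite descent. By Lemma~\ref{lemma:helper-main} (specifically, the implication from~(\ref{helper-main:1}) to~(\ref{helper-main:2})), it is enough to produce, inside $\cl{C}$, an infinite strictly descending $\hucw$-chain whose intersection has bounded clique-width. The first observation I would make is that not only $\cl{C}$ but \emph{every} $\hucw$ subclass $\cl{D} \subseteq \cl{C}$ fails to be minimal: were $\cl{D}$ minimal, it would be a minimal $\hucw$ class contained in $\cl{C}$, contrary to hypothesis. Hence every $\hucw$ class $\cl{D} \subseteq \cl{C}$ has a proper subclass that is again $\hucw$.

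Next I would define, by transfinite recursion, a chain of hereditary classes $(\cl{C}_\alpha)$ with $\cl{C}_0 = \cl{C}$. At a successor ordinal: if $\cl{C}_\alpha$ is $\hucw$, then by the previous paragraph it is not minimal, so pick $\cl{C}_{\alpha+1}$ to be a proper $\hucw$ subclass of $\cl{C}_\alpha$; if $\cl{C}_\alpha$ is not $\hucw$, halt. At a limit ordinal $\lambda$: set $\cl{C}_\lambda = \bigcap_{\alpha < \lambda} \cl{C}_\alpha$; this is hereditary, being an intersection of hereditary classes, and for each $\alpha < \lambda$ we have $\cl{C}_\lambda \subseteq \cl{C}_{\alpha+1} \subsetneq \cl{C}_\alpha$, so the chain remains strictly descending. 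This recursion cannot continue through all countable ordinals: at each successor step $\alpha \mapsto \alpha+1$ one may choose a graph $G_\alpha \in \cl{C}_\alpha \setminus \cl{C}_{\alpha+1}$ (this set is nonempty by construction), and these graphs are pairwise non-isomorphic, since for $\alpha < \beta$ we have $G_\beta \in \cl{C}_\beta \subseteq \cl{C}_{\alpha+1}$ while $G_\alpha \notin \cl{C}_{\alpha+1}$; thus the successor steps inject into the countable set of isomorphism types of finite graphs. So the recursion halts at some countable ordinal $\lambda$. Since every class produced at a successor index is $\hucw$ by construction, and $\cl{C}_0 = \cl{C}$ is $\hucw$, the halting ordinal $\lambda$ is necessarily a nonzero limit ordinal, and $\cl{C}_\lambda$ is then a hereditary class that is not $\hucw$ --- hence has bounded clique-width.

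Finally, since $\lambda$ is a countable limit ordinal it has cofinality $\omega$, so I would fix a strictly increasing sequence $\alpha_0 < \alpha_1 < \cdots$ with supremum $\lambda$. Then $(\cl{C}_{\alpha_n})_{n \in \omega}$ is an infinite strictly descending chain of $\hucw$ classes, all contained in $\cl{C}$: each $\cl{C}_{\alpha_n}$ is $\hucw$ because $\alpha_n < \lambda$, and strictness follows from $\cl{C}_{\alpha_{n+1}} \subseteq \cl{C}_{\alpha_n + 1} \subsetneq \cl{C}_{\alpha_n}$. By cofinality, $\bigcap_{n \in \omega} \cl{C}_{\alpha_n} = \bigcap_{\alpha < \lambda} \cl{C}_\alpha = \cl{C}_\lambda$, which has bounded clique-width. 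Applying Lemma~\ref{lemma:helper-main} to this chain produces the desired antichain of unbounded clique-width inside $\cl{C}$. The only genuinely delicate point --- and the reason ordinals enter at all --- is that an $\omega$-length descent of $\hucw$ classes need not have a bounded-clique-width intersection; one must therefore descend transfinitely, use the elementary cardinality bound to guarantee termination, and then recover an $\omega$-indexed chain by passing to a cofinal subsequence.
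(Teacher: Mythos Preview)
Your proof is correct and follows essentially the same route as the paper's: a transfinite descent through $\hucw$ subclasses, termination at a countable limit ordinal where the intersection has bounded clique-width, extraction of an $\omega$-indexed cofinal subchain, and an appeal to Lemma~\ref{lemma:helper-main}. Your termination argument via the injection $\alpha \mapsto G_\alpha$ into isomorphism types is slightly more explicit than the paper's, which simply asserts countability of the halting ordinal because $\cl{C}$ is countable, but the substance is identical.
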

\begin{proof}
  We assume without loss of generality that the vertices of the graphs of~$\cl{C}$ belong to the set $\mathbb{N}$ of natural numbers, so that $\cl{C}$ is countable. Suppose that~$\cl{C}$ does not contain a minimal class.  Consider the sequence $(\cl{C}_\lambda)_{\lambda \ge 0}$ of classes of structures, for ordinals $\lambda$, defined inductively as follows.  Let $\cl{C}_0 = \cl{C}$ and inductively, assume that for all $\nu < \lambda$, the class $\cl{C}_\nu$ has been defined and that $\cl{C}_\nu \subseteq \cl{C}$ for all $\nu < \lambda$. If $\lambda$ is a limit ordinal, define $\cl{C}_\lambda = \bigcap_{\nu < \lambda} \cl{C}_\nu$. If $\lambda$ is a successor ordinal of say $\lambda^-$, then define $\cl{C}_\lambda$ as follows. If $\cl{C}_{\lambda^-}$ is not $\hucw$, then $\cl{C}_{\lambda} = \cl{C}_{\lambda^-}$.  Otherwise $\cl{C}_{\lambda^-}$ is $\hucw$ and $\cl{C}_{\lambda^-} \subseteq$ $\cl{C}$; then $\cl{C}_{\lambda^-}$ cannot be minimal since by our premise, $\cl{C}$ does not contain any minimal $\hucw$ class.  Let $\cl{C}_\lambda$ be any proper subclass of $\cl{C}_{\lambda^-}$ that is $\hucw$. This completes the construction of the sequence $(\cl{C}_\lambda)_{\lambda \ge 0}$.

  Consider now the set $\mc{P}$ of ordinals defined as $\mc{P} = \{\lambda \mid \cl{C}_\lambda \text{ is not } \hucw\}$. This set is non-empty -- since $\cl{C}$ is a class of finite graphs whose vertices are natural numbers, $\cl{C}$ is countable and hence $\cl{C}_\lambda = \emptyset$ for all uncountable $\lambda$. By the definition above, if $\lambda \in \mc{P}$, then all ordinals greater than $\lambda$ are in $\mc{P}$ as well. Now since the ordinals are well ordered, $\mc{P}$ has a minimum, call it $\mu^*$. We make the following observations about $\mu^*$:
  \begin{enumerate}
    \item ${\mu^*}$ must be a limit ordinal. If it is a successor
      ordinal of say $\lambda$, then~$\cl{C}_\lambda$ must be 
      $\hucw$ since ${\mu^*}$ is the minimum ordinal in $\mc{P}$. But
      if $\cl{C}_\lambda$ is  $\hucw$, then $\cl{C}_{\mu^*}$ must
      be a $\hucw$ class by the inductive definitions
      above. Therefore, $\cl{C}_{\mu^*} = \bigcap_{\nu < {\mu^*}}
      \cl{C}_\nu$ where $\cl{C}_\nu$ is $\hucw$ for all $\nu < \mu^*$.\label{obs:1}
    \item $\mu^*$ is countable -- this is because $\cl{C}$ is countable.
    \item $\cl{C}_{\mu^*}$ is a hereditary class of bounded clique-width. Let $G \in \cl{C}_{\mu^*}$ and $H \subseteq G$. Then by
      (\ref{obs:1}) above, $G \in \cl{C}_\nu$ for all $\nu <
      {\mu^*}$. Since each $\cl{C}_\nu$ is hereditary, we have $H \in \cl{C}_\nu$ for all
      $\nu < {\mu^*}$. Then $H \in \cl{C}_{\mu^*}$. So
      $\cl{C}_{\mu^*}$ is hereditary. That~$\cl{C}_{\mu^*}$ has
      bounded clique-width now follows from the fact that
      $\cl{C}_{\mu^*}$ is not $\hucw$.
  \end{enumerate}

  Now since $\mu^*$ is countable, it has cofinality $\omega$ so that there exists an increasing function $f: \mathbb{N}  \rightarrow \mu^*$ (where $\mu^*$ is seen as the set of ordinals less than $\mu^*$) such that if $\cl{F}_i = \cl{C}_{f(i)}$ for  $i \in \mathbb{N}$, then $\bigcap_{i \in \mathbb{N}} \cl{F}_i = \cl{C}_{\mu^*}$. We observe that  
  $\cl{F}_1 \supsetneq \cl{F}_2 \supsetneq \ldots$ is an infinite
  strictly descending $\hucw$-chain in $\cl{C}$, whose intersection $\cl{C}_{\mu^*}$ is a class of bounded
  clique-width. It now follows by Lemma~\ref{lemma:helper-main} that
  $\cl{C}$ contains an antichain of unbounded clique-width.
\end{proof}

The converse of Theorem~\ref{thm:non-minimal-antichain} does not hold.  That is to say, we can construct an $\hucw$ class that both contains a minimal $\hucw$ class and contains an antichain of unbounded clique-width.  Indeed, if $\mc{C}_1$ is a minimal $\hucw$ class and $\mc{C}_2$ the hereditary closure of an antichain of unbounded clique-width then clearly  $\mc{C} = \mc{C}_1 \cup \mc{C}_2$ has this property.

\subsection{HUCW Classes which Contain No Minimal Class}\label{sec:non-minimal}

Theorem~\ref{thm:non-minimal-antichain} raises the obvious question of whether there exists any class~$\cl{C}$ which is $\hucw$ but does not contain a minimal $\hucw$ class.  The existence of such a class was demonstrated by Korpelainen~\cite{Korpelainen16}.  Here we give a similar construction which we arrived at independently.

 \begin{theorem}\label{thm:non-minimal-existence}
  There is an $\hucw$ class $\mathcal{T}$  that does not contain any minimal $\hucw$ class.
\end{theorem}

It suffices to show that
 if $\mathcal{C}$ is any hereditary subclass of $\mathcal{T}$ of unbounded
 clique-width, it contains an antichain  of unbounded clique-width.

 Towards this, let $G_{n,n}$ denote the $n \times n$ grid.
Note that, in $G_{n,n}$, every vertex has degree $2$, $3$ or $4$, and
there are exactly four vertices (at the corners) of degree~$2$.  For
$n \geq 3$, we define  $T_{n}$ as the graph obtained from
$G_{n,n}$ by:
\begin{enumerate}
\item removing every vertex $v$ of degree $2$ and inserting an edge
  between the two neighbours of $v$; and
\item replacing every vertex $v$ of degree $4$ by four new vertices
  $v_1, v_2, v_3, v_4$ that are connected in a $4$-cycle so that the
  four edges incident on $v$ are now each incident on one of the four
  new vertices.
\end{enumerate}
It is easily seen that $T_{n}$ is $3$-regular, and it is more
convenient to work with than grids.  The number of vertices in $T_n$ is less than $4n^2$. 

Recall that a graph $H$ is a \emph{subdivision} of a graph $G$ if it
is obtained from $G$ by replacing every edge by a simple path.  For a
positive integer $t$, we write $G^t$ for the $t$-subdivision of $G$: the graph
obtained from $G$ by replacing each edge of $G$ by a path of
length $t$.  We make the following simple observation for later use:
\begin{lemma}\label{lem:subdivision}
  If $H$ is a
subdivision of $G$ and $\tw(G)=k$, then $k \leq \tw(H) \leq \max(k,3)$.
\end{lemma}
\begin{proof}
  The lower bound on $\tw(H)$ follows immediately from the fact that $G$ is a minor of $H$ so $\tw(G) \leq \tw(H)$.
  
  Suppose now that $(T,\beta)$ is a tree decomposition of $G$ of width $k$.  To obtain a tree decomposition of $H$, consider an edge $\{u,v\}$ of $G$ which is subdivided into a path $u=p_0,\ldots, p_t = v$ in $H$.  As $\{u,v\}$ is an edge of $G$, there must be a node $a$ of $T$ such that $\{u,v\} \subseteq \beta(t)$.  We attach a path $a_1,\ldots,a_t$ of length $t$ to $a$ and let $\beta(a_i) = \{u,v,p_i,p_{i+1}\}$.  Doing this for each edge gives us a tree decomposition of $H$ whose width is  $\max(k,3)$.
\end{proof}

Define the class $\mathcal{T} = \{ H \mid H \subseteq T^n_n \text{ for some } n > 2\}$, i.e.\ the collection of graphs that are induced subgraphs of the $n$-subdivision of $T_n$ for some $n$.  We consider the graphs $H \in \mathcal{T}$ where every vertex has degree $2$ or $3$.  We call such graphs \emph{skeleton graphs} and  the
vertices of degree $3$  the \emph{branch vertices}.  Note that every graph in $\mathcal{T}$ is an induced subgraph of a skeleton graph.  

The next two lemmas establish some useful properties of the graphs in $\mathcal{T}$.
\begin{lemma}\label{lem:branch}
  If $H \in \mathcal{T}$ is a skeleton graph with at most $m > 2$ branch vertices, then $\cw(H) \leq 6m -2$.
\end{lemma}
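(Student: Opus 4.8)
The plan is to route through treewidth: I would show $\tw(H)\le\max(m-1,3)$ and then apply the Corneil--Rotics inequality $\cw(G)\le 3\cdot 2^{\tw(G)-1}$ \cite{corneil-rotics}. This immediately gives $\cw(H)\le 3\cdot 2^{\max(m-1,3)-1}$, and since $m>2$ forces $\max(m-1,3)-1=\max(m-2,2)\le m-1$, we get $\cw(H)\le 3\cdot 2^{m-1}$ as required. Note that this constant is exactly what Corneil--Rotics delivers (it is even tight at $m=3$, where $\tw(H)\le 3$ gives $\cw(H)\le 12 = 3\cdot 2^{2}$), so one must invoke that bound rather than the weaker $\cw(G)\le 4\cdot 2^{\tw(G)-1}+1$ recorded in Section~\ref{section:prelims}.

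The first step is to describe the shape of $H$. Let $B$ be its set of branch vertices, so $|B|\le m$ and every vertex of $H$ outside $B$ has degree $2$. Hence each connected component of $H-B$ is a path or a cycle. Since no vertex of $H\in\Gamma$ has degree $\le 1$, each endpoint of a path component of $H-B$ is adjacent in $H$ to a vertex of $B$; a cycle component, having all its vertices of degree $2$ already in $H-B$, is a connected component of $H$ itself. Thus $H$ is the disjoint union of some cycles together with a graph formed from $H[B]$ by attaching, for each path component $P=p_1,\dots,p_t$ of $H-B$, the path $b,p_1,\dots,p_t,b'$, where $b,b'\in B$ are the (unique) neighbours in $B$ of $p_1$ and $p_t$ respectively (possibly $b=b'$; if $t=1$ then $b\neq b'$ since $H$ is simple).

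For the treewidth bound I would write down a tree decomposition of $H$ directly, in the spirit of the proof of Lemma~\ref{lem:subdivision}: take one hub bag equal to $B$ (of size $\le m$), and for each path component $P$ as above hang off the hub a pendant path of bags $\{b,b',p_1\},\{b,b',p_1,p_2\},\dots,\{b,b',p_{t-1},p_t\},\{b,b',p_t\}$, each of size at most $4$; for each cycle component $Q=q_1,\dots,q_r,q_1$ attach a pendant path of bags $\{q_1,q_i,q_{i+1}\}$ of size at most $3$. One checks connectedness of the bags containing each vertex and coverage of every edge of $H$, so this is a valid tree decomposition of width $\max(|B|-1,3)\le\max(m-1,3)$. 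Combined with the Corneil--Rotics bound this completes the argument. (Equivalently, one could observe that $H$ is, up to disjoint cycles, a subdivision of a multigraph on $\le m$ vertices and quote Lemma~\ref{lem:subdivision}, once one notes its proof adapts to loops and parallel edges.)

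The only place requiring genuine care is the structural step: one must rule out degree-$\le 1$ vertices, note that cycle components of $H-B$ do not touch $B$, and allow for path components whose two ends meet the same branch vertex or for two branch vertices joined by several internally disjoint paths — these are precisely the loops and parallel edges alluded to above, and the direct-tree-decomposition formulation sidesteps any multigraph bookkeeping. Everything after that — the width estimate and the final arithmetic with $3\cdot 2^{\max(m-2,2)}\le 3\cdot 2^{m-1}$ — is routine.
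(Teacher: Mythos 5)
Your proof is correct and follows essentially the same route as the paper's: bound the treewidth of $H$ by (roughly) $m$ and then apply the $\cw(G)\le 3\cdot 2^{\tw(G)-1}$ bound. The only difference is that you build the tree decomposition explicitly and handle the parallel-path/loop/cycle-component subtleties that the paper's one-line claim ``$H$ is the subdivision of some graph $G$ with $m$ vertices'' (followed by an appeal to Lemma~\ref{lem:subdivision}) leaves implicit; this is a welcome refinement rather than a different argument.
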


\begin{proof}
  Since $H$ has at most $m$ branch vertices, it is the subdivision of some graph $G$ with $m$ vertices.    Hence, by Lemma~\ref{lem:subdivision}, the treewidth of $H$ is at most~$m$. 
  Note further that all graphs in $\mathcal{T}$ are planar and hence $H$ is planar.   For any planar graph $H$,  $\cw(H) \leq  6\tw(H)-2$~\cite[Thm~17]{courcelle2018}, and the result follows.
\end{proof}

\begin{lemma}\label{lem:unbounded}
 If $H$ is a subdivision of $T_n$ for $n > 2$, then the clique-width of $H$ is at least $(n-1)/6$.
\end{lemma}

\begin{proof}
Since $G_{n-2,n-2}$ is a minor of $T_n$ and $\tw(G_{k,k} = k$ we have that $\tw(T_n) \geq n-2$.  Also, by Lemma~\ref{lem:subdivision} we know that $\tw(H) = \tw(T_n)$.  Now, for any planar graph $G$ we have $\tw(G) \leq 6\cw(G) - 1$ by~\cite[Prop.~2.115]{courcelle-engelfriet}.  Since $H$ is planar, the result follows.
\end{proof}

\begin{proof}[Proof of Theorem~\ref{thm:non-minimal-existence}]
  The class $\mathcal{T}$ is hereditary by definition and has unbounded clique-width by Lemma~\ref{lem:unbounded}.  Thus, it remains to show that for every class $\mathcal{C} \subseteq \mathcal{T}$, if $\mathcal{C}$ has unbounded clique-width, then $\mathcal{C}$ contains an antichain of unbounded clique-width.

  So, suppose $\mathcal{C} \subseteq \mathcal{T}$ has unbounded clique-width. 
  For a graph $H \in \mathcal{T}$, write $\text{mn}(H)$ for the length of the shortest path between two branch vertices of $H$.  We define the following sequence of graphs.  First, let $G_0$ be any graph in $\mathcal{C}$ containing at least two branch vertices.  Suppose we have defined $G_i$ for $i \geq 0$, and let $t = \max(\cw(G_i),\text{mn}(G_i))$.  We then choose $G_{i+1}$ to be any graph in $\mathcal{C}$ with $\cw(G_{i+1}) > 24t^2 -2$.

  It is clear that the sequence of graphs $(G_i : i \in \omega)$ is of unbounded clique-width, since $\cw(G_{i}) < \cw(G_{i+1})$ for all $i$.  We now argue that this is also an antichain.  For any $i < j$, clearly $G_j$ cannot be an induced subgraph of $G_i$ since  $\cw(G_{i}) < \cw(G_j)$, so it remains to show that $G_i$ is not an induced subgraph of $G_j$.  Since $\cw(G_{j}) > 24t^2 -2$, where $t = \max(\cw(G_i),\text{mn}(G_i))$, it follows by Lemma~\ref{lem:branch} that $G_j$ has more than $4t^2$ branch vertices.  Since $T^n_n$ contains fewer than $4n^2$ branch vertices, it follows that $G_j$ is not an induced subgraph of $T^n_n$ for any $n  \leq t$.  Hence, $\text{mn}(G_j)$ is at least $t+1$.  However, by the choice of~$t$, $\text{mn}(G_i) \leq t$ and so $G_i$ contains two branch vertices at distance at most $t$.  We conclude that $G_i$ is not an induced subgraph of $G_j$.
\end{proof}

\section{Grid-Like Classes}\label{sec:grid-classes}
We begin our systematic exploration of all known minimal hereditary classes of unbounded clique-width.  Many such classes are defined in terms of a grid-like structure and this is used to show that they have unbounded clique-width.  The challenge in these cases is to show how this grid structure can be drawn out through an $\mso$ interpretation.  We begin with a collection of minimal $\hucw$ classes (indeed, an uncountable collection of them) defined in terms of certain infinite words and show in Section~\ref{section:word-defined-classes} that they interpret grids.  This is then extended by reductions in  Section~\ref{section:reductions} to a number of other classes.  

\subsection{Word-defined minimal classes}\label{section:word-defined-classes}

Our starting point is a construction given by Brignall and Cocks~\cite{BC22} to demonstrate that there are uncountably many minimal $\hucw$ classes,  extending a construction by Collins et al.~\cite{CollinsFKLZ18} showing the existence of infinitely many such classes.  They construct a hereditary class $\cl{S}_\alpha$ of graphs for each $\omega$-word $\alpha \in \{0, 1, 2, 3\}^{\omega}$ and show that as long as $\alpha$ contains infinitely many non-zero letters, the class $\cl{S}_\alpha$ has unbounded clique-width.   Moreover,  for uncountably  many distinct such $\alpha$, $\cl{S}_\alpha$ is also minimal.  The conditions under which  $\cl{S}_\alpha$ is minimal need not concern us here.  We are able to show that whenever $\alpha$ contains infinitely many non-zero letters  $\cl{S}_{\alpha}$ interprets grids via $\mso$ interpretations.  In particular, this covers all minimal classes $\cl{S}_{\alpha}$ of unbounded clique-width, including those defined in~\cite{CollinsFKLZ18}.  Before we proceed to a proof, we give a precise definition of the classes $\cl{S}_{\alpha}$.

The class $\cl{S}_\alpha$ is defined as the class of all finite induced subgraphs of a single countably infinite graph $\mc{P}_\alpha$.  The set of vertices of $\mc{P}_\alpha$ is $\{v_{i, j} \mid i, j \in \mathbb{N} \}$.  We think of the set as an infinite collection of \emph{columns} $V_j = \{v_{i,j} \mid i \in \mathbb{N} \}$.  All edges are between vertices in adjacent columns, i.e.\ there is no edge between $v_{i,j}$ and~$v_{i',j'}$ unless $j' = j+1$ or $j'=j-1$.  The edges between successive columns are defined by the word $\alpha$ according to the following rules.
\begin{enumerate}
    \item  If $\alpha_j = 0$, then  $\{v_{i, j}, v_{k, j+1}\} \in E(\mc{P}_\alpha)$  if, and only if, $i=k$.
    \item If $\alpha_j = 1$, then $\{v_{i, j}, v_{k, j+1}\} \in E(\mc{P}_\alpha)$ if, and only if, $i \neq k$ for $i, k \in \mathbb{N}$.
    \item If $\alpha_j = 2$, then $\{v_{i, j}, v_{k, j+1}\} \in E(\mc{P}_\alpha)$ if, and only if, $i \leq k$ for $i, k \in \mathbb{N}$.
    \item If $\alpha_j = 3$, then $\{v_{i, j}, v_{k, j+1}\} \in E(\mc{P}_\alpha)$ if, and only if, $i \ge k$ for $i, k \in \mathbb{N}$.
     \end{enumerate}
 The class $\cl{S}_\alpha$ is now given by $\cl{S}_\alpha = \{G \mid G~\mbox{is a finite induced subgraph of}~\mc{P}_\alpha\}$.  We show the following theorem in this section.

\begin{theorem}\label{theorem:word-defined-classes}
Let $\alpha \in \{0, 1, 2, 3\}^{\omega}$ be such that $\alpha$ contains infinitely many non-zero letters. Then there exists an $\mso$ interpretation $\Theta$ such that $\Theta(\cl{S}_\alpha)$ contains the class of all square grids.
\end{theorem}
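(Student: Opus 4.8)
By Lemma~\ref{lemma:reduction-helper}, applied with $\cl{C}$ taken to be the class of all square grids (which interprets grids trivially, via the identity interpretation), it is enough to produce a single $\mso$ interpretation $\Phi$ such that $\Phi(\cl{S}_\alpha)$ contains \emph{arbitrarily large} square grids: since $G_{n,n}$ is an induced subgraph of $G_{m,m}$ whenever $n\le m$ (take an $n\times n$ corner block), the hereditary closure of $\Phi(\cl{S}_\alpha)$ then already contains every square grid, which is exactly the hypothesis needed. So the task reduces to this: for every $n$, exhibit a finite graph $H_n\in\cl{S}_\alpha$ and a fixed $\mso$ interpretation $\Phi$ (with a bounded number of set parameters) such that $\Phi(H_n)$ is a square grid whose side tends to infinity with $n$.

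One tempting route --- finding suitably subdivided grids as \emph{induced subgraphs} of $\mc{P}_\alpha$ and then applying the interpretation from the proof of Lemma~\ref{lem:unbounded} that recovers $T_n$, and hence $G_{n,n}$, from any of its subdivisions --- does not work in general: when, say, $\alpha=(21)^\omega$, every ``column cut'' of $\mc{P}_\alpha$ is crossed by only a bounded number of vertex-disjoint induced paths (a half-graph, and a complete bipartite graph minus a perfect matching, each have only tiny induced matchings), whereas subdivisions of $T_n$ have unbounded cutwidth. Hence $\Phi$ must genuinely \emph{create} edges rather than merely restrict the edge relation. The positive fact to exploit is that whenever $\alpha_j\in\{1,2\}$ the bipartite graph between columns $C_j$ and $C_{j+1}$ is, respectively, a complete bipartite graph minus a perfect matching, or a half-graph, and a half-graph is precisely a bipartite graph whose one-sided neighbourhoods are linearly ordered by inclusion; so, once the column partition is available, $\mso$ --- indeed $\fo$ --- recovers a linear order on $C_{j+1}$ from the traces of the $C_j$-neighbourhoods, via $u\preceq w \iff \forall z\,(E(z,u)\to E(z,w))$ (with a dual, equality-based, definition when $\alpha_j=1$).

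Accordingly I would split into the case where $\alpha$ has infinitely many $2$s and the case where it has infinitely many $1$s (trying first to reduce the $1$-case to the $2$-case by an $\mso$ interpretation that converts a block of co-matching transitions into a half-graph-style pattern, and otherwise treating it directly in the same style). In the $2$-case, let $j_1<j_2<\cdots$ enumerate the positions with $\alpha_{j_\ell}=2$, and take $H_n$ to be $\mc{P}_\alpha$ restricted to a window of columns $1,\dots,N$ and heights $1,\dots,M$ chosen so the window contains roughly $n$ of the $2$-transitions and $M$ is large; the $0$-transitions inside the window contribute only parallel induced matchings (harmless) and the $1$-transitions are threaded through. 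The interpretation $\Phi$ would then use a constant number of set parameters to fix a ``reference frame'' (one marking the bipartition class of the leftmost column, a second marking a sparse $\mso$-recognisable sub-pattern that orients the column direction), use the half-graph traces to recover the within-column linear orders, and output a grid whose vertices are the chosen (column, height) pairs, with a ``horizontal'' edge between vertices in adjacent columns at equal recovered height and a ``vertical'' edge between vertices in the same column at consecutive recovered heights; restricting to the sub-window where both coordinates range over an $m\times m$ block yields $G_{m,m}$ with $m\to\infty$.

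The main obstacle is exactly the construction of $\Phi$, and more precisely recovering the two-dimensional coordinates of a vertex in $\mso$: the height direction is cheap (a single half-graph layer), but the column direction is global --- ``distance from the leftmost column'' is not $\mso$-definable --- so it has to be pinned down from the interplay of consecutive half-graph layers (the order a column inherits from its left neighbour is the reverse of the one it inherits from its right neighbour, which forces both the adjacency of columns and a consistent orientation) together with the fixed parameters, uniformly in $n$. One must also check that stacking half-graph layers does not introduce spurious adjacencies that corrupt the recovered grid, and that the $0$- and $1$-columns actually present in $\alpha$ do not interfere. I expect the $\alpha_j=1$ (co-matching) case to be the most delicate point, since such a transition pins a linear order only in a weak, equality-based sense and one must carefully control which heights may be reused across parallel strands.
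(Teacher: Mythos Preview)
Your reduction via Lemma~\ref{lemma:reduction-helper} and the plan to exhibit explicit graphs $H_n\in\cl{S}_\alpha$ together with a fixed interpretation carrying boundedly many set parameters is correct and is exactly the paper's framework. But what you have is a plan, not a proof: the construction of $\Phi$ is left open, and the case you yourself flag as ``most delicate'' --- transitions with $\alpha_j=1$ --- is not resolved. That is the genuine gap, and your diagnosis of where the difficulty lies is slightly off.

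The column direction you worry about is in fact the easy part. A single set parameter $\mathsf{bottom}$ marking the bottom row already suffices: away from the bottom, two vertices lie in the same column precisely when they have identical adjacency to every bottom vertex, so ``same column'' and then ``adjacent column'' are cheaply $\fo$-definable from parameters; you never need the column \emph{number}. The hard part is the vertical direction when no half-graph is available. A co-matching ($\alpha_j=1$) encodes only a bijection between adjacent columns, not a linear order, so if $\alpha$ contains no $2$s at all --- say $\alpha=1^\omega$ --- your rectangular window carries no $\mso$-definable linear order on any column, and ``consecutive recovered heights'' is undefined. Your proposed reduction of the $1$-case to the $2$-case by an interpretation is not developed, and it is not clear how it would go.

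The paper avoids the case split and solves both cases uniformly by taking a \emph{triangular} window rather than a rectangular one: the first column has three vertices, and each subsequent column with $\beta_i\neq 0$ is one vertex longer than its predecessor. With parameters $\mathsf{Colour}_1,\mathsf{Colour}_2,\mathsf{top},\mathsf{bottom},\mathsf{penult},\mathsf{prepenult},\mathsf{first},\mathsf{last}$, one first defines $\hedge{x,y}$ (same row, adjacent columns) by a case analysis on the pair $(\beta_i,\beta_{i+1})$. The triangular shape then pays off for the vertical edges: since $|X_i|=|X_{i-1}|+1$ whenever $\beta_i\neq 0$, the $\mathsf{H\text{-}edge}$-neighbour in $X_i$ of the $\mathsf{prepenult}$ vertex of $X_{i-1}$ sits exactly one row above the $\mathsf{prepenult}$ vertex of $X_i$. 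This produces, in every nonzero column, one explicitly definable pair of vertically consecutive vertices; every other vertical edge is then a horizontal translate of such a pair along $\mathsf{H\text{-}edge}$-paths, expressed via the $\mso$-definable transitive closure of $\mathsf{H\text{-}edge}$. The output is a uniform subdivision of the upper-triangular grid $U_{2n}$ (the $0$-columns contribute the subdivision paths), and a further parameterless interpretation contracts it to $U_{2n}$, which contains the $n\times n$ grid as an induced subgraph.
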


To prove Theorem~\ref{theorem:word-defined-classes}, we show the existence of an $\mso$ interpretation $\Psi$ such that the hereditary closure of $\Psi(\cl{S}_\alpha)$ contains the class of all square grids.  Proposition~\ref{prop:reduction-helper} ensures that this indeed suffices.  It is clear that graphs in $\cl{S}_{\alpha}$ have a built-in grid-like structure with vertices arranged in rows and columns.  The main challenge is to show that a sufficient part of this structure can be made explicit using an $\mso$ interpretation.  We give an outline of the construction.

What we show is that we can find in $S_{\alpha}$ a sequence of graphs $G_n$ for $n \in \mathbb{N}$ within which we can interpret \emph{upper triangular grids}.  One can think of an upper triangular grid $U_{t}$ as the subgraph of the $t \times t$ grid induced by the vertices above the main diagonal, i.e.\ those vertices in the set $\{(i,j) \mid 1\leq i,j \leq t\}$ with $i \leq j$.  It is clear that $U_t$ has as an induced subgraph an $r \times r$ grid, where $r = \lfloor \frac{t}{2} \rfloor$.

Let $\alpha \in \{0, 1, 2, 3\}^\omega$ be an $\omega$-word containing infinitely many non-zero letters.  We write $\alpha_i$ for the $i^{\text{th}}$ letter of  $\alpha$.
Let $p < \omega$ be the least value such that $\alpha_p \neq 0$.  Fix $n \ge 1$ and let $l$ be the length of the shortest contiguous subsequence of $\alpha$ starting at $\alpha_p$ that contains exactly $2n+2$ elements which are not $0$.  We write $\beta_0\cdots\beta_{l-1}$ for this sequence, so $\beta_0 = \alpha_p$.

Recall that the vertices of $P_{\alpha}$ are $\{v_{i, j} \mid i, j \in \mathbb{N} \}$, and we write $V_j$ for the set $\{v_{i,j} \mid i \in \mathbb{N} \}$.  We define the graph $G_n$ to be the subgraph of $P_{\alpha}$ induced by the set $\col{-1} = \bigcup\limits_{i = 0}^{i = l-1} \col{i}$ where $\col{i} \subseteq V_{p+i}$ is defined as follows for $0 \leq i < l$.
\begin{enumerate}
    \item $\col{0} = \{v_{0,p}, v_{1,p},v_{2,p}, v_{3, p}\}$; and 
    \item $\col{i+1} =  \{v_{0,p+i+1},v_{1,p+i+1},\ldots,v_{t-1,p+i+1}\}$ where $t = |\col{i}|$ if $\beta_{i+1} = 0$ and $t = |\col{i}|+1$ otherwise.
\end{enumerate}

\begin{figure}[ht]
    \centering
    \includegraphics[scale=0.7]{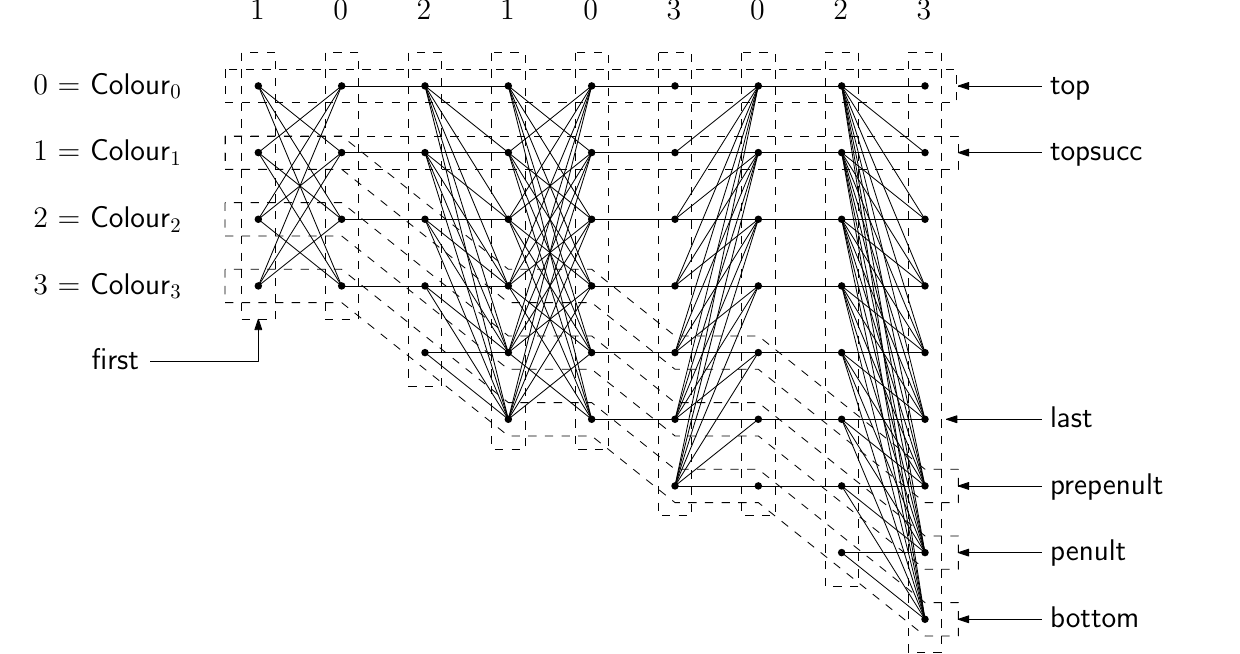}
    \caption{The graph $H_2$ for $\alpha = (102103023)^\omega$.  The unlabeled graph underlying $H_2$ is $G_2$.}
    \label{fig:word-defined-classes-Hn}
\end{figure}

It is clear that $G_n \in \cl{S}_\alpha$.  We show that we can interpret upper triangular grids in this class of graphs.  The key challenge in defining the required interpretation is to define the two binary relations: one that relates vertices that are in the same  column and the other that relates vertices that are in the same row.  In constructing the interpretation we make use of a number of set parameters to obtain a labeled version $H_n$ of $G_n$; in particular, $H_n$ uses unary predicates for the vertices corresponding to the possible values of $\beta_i$, for the first and last column, the top, bottom and penultimate rows, and the rows immediately succeeding and preceding the top and penultimate rows respectively.  The ``diagonal'' nature of the bottom row is vital to allowing us to define when two vertices are in successive columns, which we need in order to define the two relations of being in the same row and in the same column. We now give all the technical details below.

We define the graph $H_n$ as the expansion of $G_n$ with unary predicates $\mathsf{Colour}_{0}$, $\mathsf{Colour}_{1}$, $\mathsf{Colour}_{2}$, $\mathsf{Colour}_{3}$, $\mathsf{top}$, $\mathsf{topsucc}$, $\mathsf{bottom}$, $\mathsf{penult}$, $\mathsf{prepenult}$, $\mathsf{first}$ and $\mathsf{last}$ which are interpreted as follows.  For $i \in \{0, 1, 2, 3\}$, the predicate $\mathsf{Colour}_{i}$ is interpreted as the set $\bigcup\limits_{\beta(j) = i} \col{j}$; $\mathsf{top}$ is interpreted as the top row of $G_n$; $\mathsf{topsucc}$ is interpreted as the second row of $G_n$ after the top;  $\mathsf{bottom}$ is the set of all vertices $v_{i,j}$ such that
$i' \leq i$ for all $v_{i',j} \in \col{j}$; $\mathsf{penult}$ is the set of all vertices $v_{i,j}$ such that $v_{i+1,j}$ is in $\mathsf{bottom}$; $\mathsf{prepenult}$ is the set of all vertices $v_{i,j}$ such that $v_{i+2,j}$ is in $\mathsf{bottom}$; and finally, $\mathsf{first}$ and $\mathsf{last}$ are interpreted as the sets $\col{0}$ and $\col{l-1}$ respectively.  Figure~\ref{fig:word-defined-classes-Hn} provides an illustration.

We now describe the construction of the interpretation $\Psi$. Towards this, we need a number of auxiliary predicates which we define below.


\begin{enumerate}
    \item $\samecol{x, y}$: This predicate is true of $x, y$ in $H_n$ if $x, y \in \col{i}$ for some $i \in \{0, \ldots, l-2\}$ and $\beta_i \in \{1, 2, 3\}$ as long as neither of $x, y$ is in  $\mathsf{top}$ or $\mathsf{bottom}$. 
    \[
    \begin{array}{rll}
        \rlboundary{x} & := & \top{x} \vee \bottom{x} \\
        \samecol{x, y} & := & \neg (\rlboundary{x} \vee \rlboundary{y}) \wedge\\
        & & \neg \mycolor{0}{x} \wedge \bigwedge\limits_{i = 1}^{i = 3} \mycolor{i}{x} \leftrightarrow \mycolor{i}{y} \wedge\\
        & & \forall z \big( (\neg \mycolor{3}{x} \wedge \bottom{z}) \bigvee\\
        & & ~~~~~(\mycolor{3}{x} \wedge \top{z})\big)\\
        & & ~~~~~\rightarrow (E(x, z) \leftrightarrow E(y, z))
    \end{array}
    \]

    To understand the last condition, note that if $x$ and $y$ are in the same column  $\col{i}$ that is not the last, with $\beta_i \in \{1,2,3\}$, and neither of $x$ or $y$ is in the top or bottom row in $\col{i}$, then the bottom elements of $\col{i-1}$ and $\col{i+1}$ are neighbours of either both $x$ and $y$ or neither; likewise for the top elements of $\col{i-1}$ and~$\col{i+1}$.  On the other hand,   suppose $x$ and $y$ are in different columns, say $\col{i}$ and $\col{j}$ respectively with $i < j$.  Since $\beta_j$ is 1, 2 or 3, we know  that $|\col{j+1}| = |\col{j}| + 1$, and hence if $\beta_j$ is 1 or 2, then every element of $\col{j}$,  in particular $y$, is adjacent to the bottom element $z$ of $\col{j+1}$, and if $\beta_j$ is 3, then every element of $\col{j}$, and $y$ in particular, is adjacent to the top element $z'$ of $\col{j+1}$.  Since $x$ is not in a column adjacent to $\col{j+1}$, it cannot have an edge to either $z$ or $z'$, and hence $x$ and $y$ do not satisfy the predicate $\mathsf{samecolumn}$.

    \item $\adjcol{x, y}$: This predicate is true of $x, y$ in $H_n$ if for some $i, j$ with  $|i - j| = 1$ and $\beta_i \in \{1, 2, 3\}$, it holds that $x$ is in $\col{i}$ but not in  $\mathsf{top}$ or $\mathsf{bottom}$, and $y \in \col{j}$.
    \[
    \begin{array}{lll}
    \adjcol{x, y} & := & \neg (\rlboundary{x} \vee \mycolor{0}{x}) ~\wedge \\
    & &\exists u\, (\samecol{x, u} \wedge E(u, y))\\
    \end{array}
    \]
    \item $\domain{x}$; This predicate is true of $x$ in $H_n$ if it is not one of the ``periphery'' vertices of $G_n$.
    \[
    \domain{x} := \neg (\rlboundary{x} \vee \topsucc{x} \vee \penult{x} \vee \first{x} \vee \last{x})
    \]
   \item \noindent $\rhscol{i; S}{x, y}$: For $i \in \{0, 1, 2, 3\}$ and $S \subseteq \{0, 1, 2, 3\}$, this predicate is true of $x, y$ in $H_n$ if $\domain{x}$ and $\domain{y}$ both hold and if  $x \in \col{j}$ and $y \in \col{j+1}$ for some $j$ with $\beta_j = i$ and $\beta_{j+1} \in S$. 
    We need this predicate only for the following specific values of $[i; S]$: (i) $[0; \{1, 2, 3\}]$, (ii) $[1; \{0, 2, 3\}]$, (iii) $[2; \{0, 1, 2, 3\}]$, and (iv) $[3; \{0, 1, 2, 3\}]$.  
    
    \[
    \begin{array}{rll}
	    \rhscol{3; \{0, 1, 2\}}{x, y} & := & \domain{x} \wedge \mycolor{3}{x} \wedge \domain{y} \wedge \\
        & & \bigwedge\limits_{i = 0}^{i = 3} \mycolor{i}{y} \rightarrow \eta_i(x, y)\\
        \eta_0(x, y) & := & \adjcol{x, y} \wedge \exists v\, (\samecol{x, v}  \\
		& & \wedge \penult{v} \wedge E(v, y))\\
		\eta_1(x, y) & := & \adjcol{x, y} \wedge\\
        & & \exists v\, \big(\samecol{x, v} \wedge \prepenult{v} \wedge \\
        & & ~~~~\exists w (\samecol{y, w} \wedge \neg E(w, v))\big)
    \end{array}
    \]
    \[
    \begin{array}{rll}
        \eta_2(x, y) & := & \eta_1(x, y)\\[3pt]
        \eta_3(x, y) & := & \adjcol{x, y} \wedge\\
        & & \exists v\, \big(\samecol{x, v} \wedge \prepenult{v} \wedge \\
        & & ~~~~\exists w (\samecol{y, w} \wedge E(w, v))\big)\\[3pt]
        \rhscol{2; \{0, 1, 2\}}{x, y} & := & \domain{x} \wedge \mycolor{2}{x} \wedge \domain{y} \wedge \\
        & & \bigwedge\limits_{i = 0}^{i = 3} \mycolor{i}{y} \rightarrow \eta_i(x, y)\\[3pt]
        \rhscol{1; \{0, 2, 3\}}{x, y} & := & \domain{x} \wedge \mycolor{1}{x} \wedge \domain{y} \wedge \\
        & & \neg \mycolor{1}{y} \wedge \\
        & & \bigwedge\limits_{i \in \{0, 2, 3\}} \mycolor{i}{y} \rightarrow \eta_i(x, y)\\[3pt]

    
      \rhscol{0; \{1, 2, 3\}}{x, y} & := & \domain{x} \wedge \mycolor{0}{x} \wedge \domain{y} \wedge \\
        & & \neg \mycolor{0}{y} \wedge \adjcol{y, x} \wedge\\
        & & \forall v\, (\samecol{y, v} \wedge E(v, x)) \rightarrow\\
        & & \neg \big( (\mycolor{3}{v} \wedge \penult{v}) \vee\\
        & & ~~~   (\mycolor{2}{v} \wedge \topsucc{v})\big)
    \end{array}
  \]
These predicates are meant to give an orientiation to some edges in the symmetric relation $\adjcol{x, y}$.  Thus, it is sufficient to argue that if~$x$ is in $\col{i}$, then $y$ cannot be in $\col{i-1}$.  We present the argument for the case when $\beta_i = 3$. Other cases can be argued similarly.
    
Suppose $\beta_i = 3$.  There are four subcases depending on the value of $\beta_{i-1}$.  If $\beta_{i-1} = 0$, then the only element $z$ of $\col{i-1}$ that is adjacent to the penultimate element of $\col{i}$ is the bottom element of $\col{i-1}$.  But then $\domain{z}$ does not hold.  Thus no $y \in \col{i-1}$ satisfies the formula $\eta_0(x, y)$. If $\beta_{i-1} \in \{1, 2\}$, then the only element of $\col{i-1}$ that is not adjacent to the element of $\col{i}$ that satisfies $\mathsf{prepenult}$ is  the penultimate element of $\col{i-1}$ if  $\beta_{i-1}=1$ or the bottom element of $\col{i-1}$if $\beta_{i-1}=2$. But neither of these elements is in $\mathsf{domain}$. Finally, if $\beta_{i-1} =3$, then the only elements of $\col{i-1}$ that are adjacent to the element of $\col{i}$ that satisfies $\mathsf{prepenult}$ are  the penultimate and bottom elements of $\col{i-1}$; but again, neither of these elements is in $\mathsf{domain}$.

  \item \noindent $\hedge{x, y}$: This predicate is true of $x, y$ in $H_n$ if both  $\domain{x}$ and~$\domain{y}$ hold, $x$ and $y$ are in the same row and adjacent columns of~$H_n$ and either (i) $x \in \col{i}$ and $y\in \col{i+1}$ for some $i$; or (ii) $y \in \col{i}$ and $x\in \col{i+1}$ with $\beta_i = \beta_{i+1} \notin \{2, 3\}$.
    
    \[
    \begin{array}{rll}
        \hedge{x, y} & := & \domain{x} \wedge \domain{y} \wedge \bigwedge\limits_{i = 0}^{i = 3} \mycolor{i}{x} \rightarrow \gamma_i(x, y) \\[3pt]
        \gamma_3(x, y) & := &  \rhscol{3; \{0, 1, 2, 3\}}{x, y} \wedge E(x, y) \wedge\\
        & & \forall z (\rhscol{3, \{0, 1, 2, 3\}}{x, z} \wedge\\
        & & \lessthan{y, z, x} \wedge z \neq y) \rightarrow \neg E(x, z) \\[3pt]
        \lessthan{y, z, x} & := & \forall v (\samecol{x, v} \wedge E(z, v)) \rightarrow E(y, v)\\[3pt]
        \gamma_2(x, y) & := &  \rhscol{2; \{0, 1, 2, 3\}}{x, y} \wedge E(x, y) \wedge\\
        & & \forall z (\rhscol{2, \{0, 1, 2, 3\}}{x, z} \wedge\\
        & & \lessthan{y, z, x} \wedge z \neq y) \rightarrow \neg E(x, z) \\[3pt]
    \end{array}
    \]
    \[
    \begin{array}{rll}
        \gamma_1(x, y) & := & \big(\mycolor{1}{y} ~\wedge\\
        & & ~~(\exists z (\samecol{y, z} \wedge E(x, z))) \wedge \neg E(x, y)\big) \\
        & & \vee ~\big(\rhscol{1; \{0, 2, 3\}}{x, y} \wedge \neg E(x, y)\big)\\[3pt]
        \gamma_0(x, y) & := & \big(\mycolor{0}{y} \wedge E(x, y)\big) ~\vee \\
        & & \big(\rhscol{0; \{1, 2, 3\}}{x, y} \wedge E(x, y)\big)\\
    \end{array}
    \]
Suppose $x \in \col{i}$ and $y \in \col{j}$.  In all cases in the definition above except when $\beta_i = \beta_j \in \{0, 1\}$, it is the case that $\rhscol{\cdot}{x, y}$ is true, which means $j = i+1$.  In the case when $\beta_i = \beta_j =0$, we see that $x$ and $y$ are required to be adjacent, and when $\beta_i = \beta_j =1$, $x$ and $y$ are required to be non-adjacent with the additional condition that there is some element~$z$ in the same column as $y$ that is adjacent to $x$ -- both of these cases can happen only when  $x$ and $y$ are in adjacent columns and in the same row. We therefore are left with arguing that when $j = i+1$, then $x$ and $y$ satisfy~$\hedge{x, y}$ if, and only if, they are in the same row.

If $x \in \col{i}$ and $\beta_i = \{0,1\}$, the element $y$ of $\col{i+1}$ that is in the same row as~$x$ is easily distinguished.  If $\beta_i = 0$, $y$ is the only element of $\col{i+1}$ that is adjacent to $x$ and if $\beta_i = 1$ it is the only element of $\col{i+1}$ not adjacent to~$x$.  When $\beta_i=3$, we see that for elements $z$  and $y$ of  $\col{i+1}$ appearing in say  the rows $j$ and $j'$, we have $j \leq j'$ if, and only if, every element of $\col{i}$ (the column of $x$) that is adjacent to $z$ is also adjacent to $y$. This is expressed by the predicate $\lessthan{y, z, x}$.   With this linear order on $\col{i+1}$ defined, we see that an element $y$ of $\col{i+1}$ is in the same row as $x$ if, and only if, $x$ and $y$ are adjacent, and no element of $\col{i+1}$  that is less than $y$ is adjacent to $x$. Analogous arguments can be given for the final case of $\beta_i = 2$.

\item $\vedge{x, y}$: This predicate is true of $x, y$ in $H_n$ if  $\domain{x}$ and $\domain{y}$ both hold, and for some $i, j$, both $x$ and $y$ appear in the column $\col{j}$ such that $\beta_i \neq 0$, and $x$ appears in row $i$ and $y$ in row $i+1$.  In the following definition, $\mathsf{TC}\hedge{x,y}$ denotes that the pair $(x,y)$ is in the reflexive and transitive closure of $\mathsf{H}\text{-}\textsf{edge}$.  The reflexive and transitive closure of any binary relation is easily defined in $\mso$.
    \[
    \begin{array}{rll}
        \vedge{x, y} & := &\neg \mycolor{0}{x} \wedge \samecol{x, y}  \wedge\\
        & & \exists u \exists v \big(\prepenultedge{u, v} \wedge\\ 
        & & \hspace{0.9cm} \big(\mathsf{TC}\hedge{u,x} \wedge \mathsf{TC}\hedge{v,y} \big)\\
        \prepenultedge{u, v} & := & \neg \mycolor{0}{u} \wedge \samecol{u, v} \wedge \prepenult{v} ~\wedge \\
        & & \exists z (\prepenult{z} \wedge \hedge{z, u})\\
    \end{array}
  \]
  The formula $\prepenultedge{u, v}$ defines those pairs $(u,v)$ in the domain for which for some $i, j$, the vertices $u, v$ belong to $\col{i}$, and appear resp. in the rows $j$ and $j+1$, with the bottom element of $\col{i}$ appearing in row $j+3$.  To see why this definition is correct, note that $\prepenult{v}$ is true precisely when $v$ appears in row $j+1$ in $\col{i}$ (if the bottom element of $\col{i}$ appears in row $j+3$).  To identify $u$ in row $j$ of $\col{i}$, we exploit crucially the special way in which the columns were chosen in $G_n$: if $\beta_i\in\{1,2,3\}$ then $|\col{i}| = |\col{i-1}| + 1$.  This ensures that $\hedge{z,u}$ holds for the element $z \in \col{i-1}$ for which $\prepenult{z}$ holds.  Given the definition of $\prepenultedge{u, v}$ we see that every pair $(x, y)$ of elements  in the domain that are in the same column and in consecutive rows, is just a ``horizontal translate'' of a pair $(u, v)$ satisfying $\prepenultedge{u, v}$.  That is, $x$ and $y$ are reachable from $u$ and $v$ respectively by $\mathsf{H}\text{-}\textsf{edge}$-paths.

\end{enumerate}

We are now ready to define the $\mso$ interpretation $\Psi$.  Define an ``upper triangular'' $r \times r$ grid as the graph $U_r$ whose vertex set is $\{u_{i, j} \mid 1 \leq j \leq r, i \leq j\}$ and whose edge set is $\{\{u_{i, j}, u_{i, j+1}\}  \mid 1 \leq j < r, i \leq j\}\cup \{\{u_{i, j}, u_{i+1, j}\}  \mid 1< j \leq~r, i < j\}$.  A \emph{uniform subdivision} of $U_r$ is the graph obtained by choosing a subset $S \subseteq \{1, \ldots, r-1\}$ and for each $j \in S$ and each $i \leq j$, replacing the edge $\{u_{i, j}, u_{i, j+1}\}$ with a path on $k_j$ vertices for some $k_j \ge 2$.  It is easy to show that there exists a parameterless $\mso$ interpretation $\Gamma$ from graphs to graphs such that if $Z$ is  a uniform subdivision of $U_r$, then $\Gamma(Z)$ is  $U_r$. Observe that $U_{2r}$ contains the $r \times r$ grid as an induced subgraph.

We now define $\Psi$ as the composition given by $\Psi = \Gamma \circ \Delta$ where $\Delta = (\Delta_V(x),$ $\Delta_E(x, y))$ is as below. The formulae below contain the predicates $\mathsf{Colour}_{0}, \mathsf{Colour}_{1},$ $\mathsf{Colour}_{2}, \mathsf{Colour}_{3}, \mathsf{top}$,  $\mathsf{topsucc}$, $\mathsf{bottom}, \mathsf{penult},$ $\mathsf{prepenult}, \mathsf{first}$ and $\mathsf{last}$ which constitute the parameters of $\Psi$.
\[
\begin{array}{lll}
     \Delta_V(x) & := & \domain{x}  \\
     \Delta_E(x, y) & := & \hedge{x, y} \vee \hedge{y, x} \vee \vedge{x, y} \vee \vedge{y, x} 
\end{array}
\]
We observe that for the graph $H_n$ defined above, $\Delta(H_n)$ is indeed isomorphic to a uniform subdivision of $U_{2n}$.  Then $\Psi(H_n)$ is isomorphic to $U_{2n}$ and hence contains the $n \times n$ grid as an induced subgraph. 

\begin{proof}[Proof of Theorem~\ref{theorem:word-defined-classes}]
Given $\alpha \in \{0, 1, 2, 3\}^\omega$ containing infinitely many non-zero letters, consider the MSO interpretation $\Psi$ as described above. The hereditary closure of $\Psi(\cl{S}_\alpha)$ contains the class of all square grids. Taking $\cl{C}$ in Proposition~\ref{prop:reduction-helper} to be the class of all square grids, $\cl{D}$ to be $\cl{S}_\alpha$ and $\Xi$ to be $\Psi$, we are indeed done.
\end{proof}

\subsection{Composing Interpretations}\label{section:reductions}

We now consider the classes of graphs shown to be minimal $\hucw$ in~\cite{Lozin11, ABLS15}, and prove that these interpret grids using Theorem~\ref{theorem:word-defined-classes} above. Specifically, we show that for each class $\cl{C}$ among them, there is some $\alpha \in \{0, 1, 2, 3\}^\omega$ and an MSO interpretation $\Xi$ such that the hereditary closure of $\Xi(\cl{C})$  contains $\cl{S}_\alpha$.  Thus $\cl{C}$ interprets grids by Proposition~\ref{prop:reduction-helper}. 

\begin{theorem}\label{theorem:reductions}
The following minimal $\hucw$ classes of graphs interpret grids:
\begin{enumerate}
    \item Bichain graphs\label{bichain-graphs}
    \item Split permutation graphs\label{SP-graphs}
    \item Bipartite permutation graphs\label{BP-graphs}
    \item Unit interval graphs\label{UI-graphs}
\end{enumerate}
\end{theorem}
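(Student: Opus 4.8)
The plan is not to redo the grid construction of Section~\ref{section:word-defined-classes} but to \emph{transfer} it: for each of the four classes $\cl{D}$ in the statement I would exhibit an $\mso$ interpretation $\Xi$ such that the hereditary closure of $\Xi(\cl{D})$ contains a class already known to interpret grids, and then invoke Lemma~\ref{lemma:reduction-helper}. Everything is funnelled towards the class of bipartite permutation graphs. For that class one first pins down that $\cl{S}_{\overline{1}}$ (the all-$1$'s word) is, up to isomorphism, a subclass of the bipartite permutation graphs: every finite induced subgraph of $\mc{P}_{\overline{1}}$ is bipartite by construction, and its between-column structure is an antimatching, so it admits a permutation diagram. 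Since $\overline{1}$ has infinitely many $1$'s, Theorem~\ref{theorem:word-defined-classes} gives that $\cl{S}_{\overline{1}}$ interprets grids, hence so does any superclass, and in particular the class of bipartite permutation graphs interprets grids. The remaining three classes are reduced to this one.

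\textbf{Split permutation graphs and bichain graphs.} Both reduce to the bipartite permutation case via an $\mso$ interpretation that modifies the edges across an $\mso$-definable partition of the vertex set. A split permutation graph comes with a partition of its vertices into a clique $K$ and an independent set $I$; passing $K$ as a set parameter $Z$, let $\Xi$ delete exactly the edges inside $Z$, i.e.\ $\psi(x,Z):=(x=x)$ and $\psi_E(x,y,Z):=E(x,y)\wedge\neg(Z(x)\wedge Z(y))$. The output is bipartite, one checks it is always a bipartite permutation graph, and conversely every bipartite permutation graph $B$ with sides $(A,B)$ arises, namely from the split permutation graph obtained from $B$ by turning $A$ into a clique. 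Hence the hereditary closure of $\Xi(\text{split permutation graphs})$ contains all bipartite permutation graphs, and Lemma~\ref{lemma:reduction-helper} finishes this case. Bichain graphs are treated the same way, using instead the $\mso$ interpretation that complements the edges across the canonical bipartition (again an $\mso$ interpretation with the bipartition class passed as a parameter), so that its image contains, hereditarily, the bipartite permutation graphs.

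\textbf{Unit interval graphs.} These are not bipartite, so here $\Xi$ must genuinely change the graph; the idea is to realize the classical correspondence between unit interval graphs and bipartite permutation (equivalently, proper interval bigraph) structure. For a connected unit interval graph the indifference order on its vertices is $\mso$-definable up to reversal; using it one passes as a parameter a selection of ``alternate'' vertices along that order and defines a bipartite graph on the \emph{same} vertex set whose adjacency records the original edges between selected and unselected vertices. One argues this output is always a bipartite permutation graph and, conversely, that every bipartite permutation graph is obtained this way by interleaving its two sides along a suitable indifference order; Lemma~\ref{lemma:reduction-helper} then reduces to the bipartite permutation case.

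\textbf{Main obstacle.} I expect the unit interval case to be where the real work lies. The textbook form of the unit-interval/bipartite-permutation correspondence naturally uses two copies of each vertex, i.e.\ an $\mso$ transduction with a copying operation, whereas the interpretations used here are single-copy (the domain of the output is a definable subset of the input's vertices). So the task is to reorganise that correspondence as a single-copy parameterised interpretation, via the interleaving along the indifference order sketched above, while keeping its image rich enough that the hereditary closure really captures an entire grid-interpreting class. That ``onto'' verification --- that the hereditary closure of the image of each interpretation contains a whole grid-interpreting class and not merely some of its members --- is a secondary but unavoidable chore in every one of the four cases.
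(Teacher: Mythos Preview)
Your strategy of funnelling everything through one base class via Lemma~\ref{lemma:reduction-helper} is the right shape, but both the base case and the reductions fail as stated. The base case is wrong: $\cl{S}_{\overline{1}}$ is \emph{not} contained in the bipartite permutation graphs. Two columns of size $3$ in $\mc{P}_{\overline{1}}$ induce $K_{3,3}$ minus a perfect matching, i.e.\ $C_6$, and the height-one poset underlying $C_6$ is (up to relabelling) the standard example of order-dimension~$3$, so $C_6$ is not a permutation graph. The paper instead uses the identification, already noted in~\cite{CollinsFKLZ18}, of the bipartite permutation graphs with $\cl{S}_\alpha$ for $\alpha = 2^\omega$, so Theorem~\ref{theorem:word-defined-classes} applies directly with no further work.

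Even after that fix, your reductions for bichain and split permutation graphs do not land where you claim. Bipartite complementation sends chains to chains (neighbourhood inclusion simply reverses), hence bichain graphs to bichain graphs; since bichain graphs and bipartite permutation graphs are distinct minimal $\hucw$ classes, neither properly contains the other, and the hereditary closure of your image cannot contain all bipartite permutation graphs. For split permutation graphs, deleting the clique edges yields a bichain graph, not a bipartite permutation graph---this is precisely Proposition~2.3 of~\cite{ABLS15}. The paper therefore reverses your order: bichain graphs are handled first, by an explicit $\fo$ interpretation (with parameters marking the top and bottom rows, the first and last columns, and column parity) that extracts the $n\times n$ grid from the universal bichain graph $Z_{n+2}$; split permutation graphs then reduce to the bichain case via that proposition. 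Unit interval graphs require no construction at all: the paper simply invokes Courcelle's result~\cite{courcelle} that every class of interval graphs of unbounded clique-width has undecidable $\mso$ theory.
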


\begin{remark}
Note that Theorem~\ref{theorem:reductions}(\ref{UI-graphs}) follows from the results of Courcelle in~\cite{courcelle}. It is shown in~\cite{courcelle} that Seese's conjecture holds for the class of interval graphs. More specifically, it can be inferred from the results in~\cite{courcelle} that any unbounded clique-width subclass of interval graphs admits MSO interpretability of grids. It follows, in particular, that this is true of the unit interval graphs. We therefore show parts (\ref{bichain-graphs})--(\ref{BP-graphs}) of Theorem~\ref{theorem:reductions} to complete its proof.
\end{remark}

\noindent \textbf{Bichain graphs.}
\newcommand{\bichain}{\textsf{Bichain}}
We need some terminology to talk about these graphs. Given a graph $G$, a sequence $v_1,\ldots,v_k$ of vertices of $G$ is said to be a \emph{chain} if $N(v_i) \subseteq N(v_j)$ whenever $i \leq j$, where $N(v) := \{u \mid E(u,v)\}$ denotes the neighbourhood of $v$.  A bipartite graph $(A\cup B, E)$ is called a \emph{$k$-chain graph} if each of the two parts $A$ and $B$ can be further partitioned into at most $k$ chains.  A \emph{bichain graph} is a 2-chain graph.

\begin{figure}[ht]
    \centering
    \includegraphics[scale=0.75]{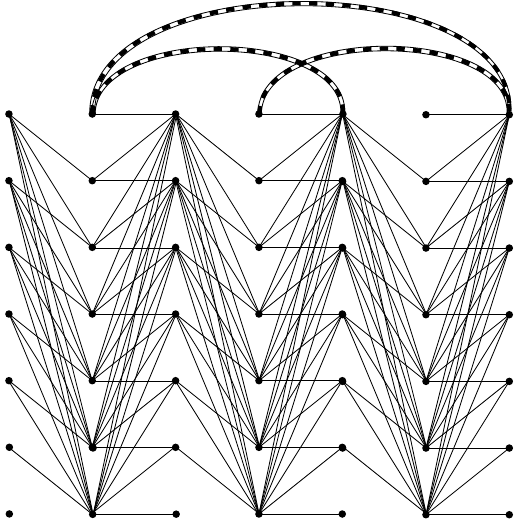}
    \caption{The bichain graph $Z_7$. A dashed line between any 2 columns denotes that the subgraph induced by the vertices of the two columns is a complete bipartite graph.}
    \label{fig:bichain}
\end{figure}

We now describe the bichain graph $Z_{n}$ as defined in~\cite{ABLS15}. The graph has vertex set $\{z_{i, j} \mid 1 \leq i \leq n, 1 \leq j \leq n\}$ (which can thus be seen as an $n \times n$ grid of points), and $\{z_{i, j}, z_{i', j'}\}$ is an edge if, and only if, one of the following holds: (i) 
 $j$ is odd, $j' = j + 1$ and $i < i'$; (ii) $j$ is even, $j' = j + 1$ and $i' \leq i$; or (iii)
 $j$ is even, $j'$ is odd and $j' \ge j + 3$.  The graph $Z_7$ is depicted in Figure~\ref{fig:bichain}.  The graph~$Z_n$ is \emph{$n$-universal} in that all bichain graphs on at most $n$ vertices are induced subgraphs of $Z_{n}$.  
It follows that the class {\bichain} of all bichain graphs is exactly the hereditary closure of the class $\{Z_{n} \mid n \ge 1\}$. 

Again, the grid structure is implicit in the graph $Z_n$.  What we show is that when $Z_n$ is expanded with unary relations for the bottom row  $\{z_{n,j} \mid 1\leq j \leq n\}$ and the last column $\{z_{i,n} \mid 1\leq i \leq n\}$, we can construct an FO interpretation to a class that contains the class $\cl{S}_\alpha$ for $\alpha = (23)^\omega$ in its hereditary closure.  We describe below the construction of this interpretation that we denote $\Psi$.

Let $H_n$ denote the expansion of $Z_n$ with unary predicates $\mathsf{bottom}$ and $\mathsf{last}$ that are respectively interpreted as the bottom row of $Z_n$ (namely the set $\{z_{n, j} \mid 1 \leq j \leq n\}$) and the last column of $Z_n$ (so the set $\{z_{i, n} \mid 1 \leq i \leq n\}$). Towards the construction of $\Psi$, we need the auxiliary predicates $\samecol{x, y}$ and $\adjcol{x, y}$. The first of these is true of $x, y$ in $H_n$ if, and only if, $x$ and~$y$ appear in the same column of $H_n$ and neither is the bottom element of that column. The second of these is true if $x$ and $y$ are in adjacent columns in $H_n$ and neither is the bottom element of its column.  We assume below that $n \ge 3$.

\[
\begin{array}{lll}
    \samecol{x, y} & := & \neg (\bottom{x} \vee \bottom{y}) ~\wedge \\
    & & \forall z (\bottom{z}\rightarrow (E(x, z) \leftrightarrow E(y, z)))\\[3pt]
    \adjcol{x, y} & := & \neg \big(\bottom{x} \vee \bottom{y}) ~\wedge \\ 
	& & ~~~\exists u \exists v (\samecol{u, x} \wedge \samecol{v, y} \wedge E(u, v)) ~\wedge\\
	& & ~~~\exists u \exists v (\samecol{u, x} \wedge \samecol{v, y} \wedge \neg E(u, v)\big)
\end{array}
\]
We briefly reason the correctness of the above predicate definitions. For \linebreak $\samecol{x, y}$, it is clear that this formula is true for any $x$ and $y$ in the same column of $H_n$ as long as they are not bottom elements.  To see that no other pair satisfies the formula, let $x = z_{i,j}$ and $y=z_{i',j'}$ with $j < j'$.  We argue by cases.   If $j'$ is odd, then $y$ is adjacent to the bottom element $u$ of column $j'+1$.  Moreover, since $j'+1$ is then even, $u$ is not adjacent to any $ z_{i,j}$ with $j < j'$.  On the other hand, if $j'$ is even, then we consider whether $j$ is odd or even.  If $j$ is odd, $x$ is adjacent to the bottom element of column $j+1$ and $y$ is not while if $j$ is odd, $x$ is adjacent to the bottom element of column $j'+1$ and~$y$ is not.

For $\adjcol{x, y}$, if $x$ and $y$ are in adjacent columns and not bottom elements of their respective columns, then let $x'$ and $x''$ be the elements in the column of $x$ in resp. the top row and the row just before the bottom in $H_n$, and let $y''$ be the  element corresponding to $x''$ in the column of $y$. Since $n \ge 3$, we have that $x', x'', y''$ are all distinct. We now see that if the column of $x$ is odd, then $E(x'', y'')$ is false while $E(x', y'')$ is true in $H_n$; otherwise, $E(x'', y'')$ is true while $E(x', y'')$ is false in $H_n$. Then $\adjcol{x, y}$ is true in $H_n$.  To see that no other pairs $(x, y)$ other than those just considered satisfy $\adjcol{x, y}$, let $x = z_{i,j}$ and $y=z_{i',j'}$ with $j < j'+1$.  We again argue by cases. If $j$ is odd, then since the only column $k \ge j$ for which some vertex of column $j$ is adjacent to some vertex of column $k$, is the column $k = j+1$, it follows that $\adjcol{x, y}$ is false. If $j$ is even, then if $j'$ is even, then no vertex of column $j$ is adjacent to any vertex of column $j'$, and if $j'$ is odd, then every vertex of column $j$ is adjacent to every vertex of column $j'$. In either case, $\adjcol{x, y}$ is false.

Consider now the interpretation $\Psi = (\Psi_V, \Psi_E)$ defined as:
\[
\begin{array}{lll}
    \Psi_V(x) & := & \neg \bottom{x} \wedge \neg \last{x} \\
    \Psi_E(x, y) & := & \adjcol{x, y} \wedge E(x, y) 
\end{array}
\]
It is easy to see that $\Psi(H_{2n+1})$ is the graph $Z_{2n}$ with the edges connecting non-adjacent columns removed; call this graph $Z_{2n}'$. Let $z_{i, j}$ be the vertex of~$Z_{2n}'$ in row $i$ and column $j$ (in  the natural grid in which the vertices of $Z_{2n}'$ are arranged). Consider the subgraph of $Z_{2n}'$ induced by the set $\mathsf{V}$ of vertices given by $\mathsf{V} = \{z_{i, j} \mid k+1 \leq i \leq k+n~\mbox{where}~\lfloor \frac{j}{2}\rfloor = k, 1 \leq j \leq n\}$. One verifies that this subgraph is indeed isomorphic to the graph $Y_n$ that is induced by the vertices in the first $n$ rows and first $n$ columns, in the graph $\mc{P}_\alpha$ where $\alpha = (23)^\omega$. 

\begin{proof}[Proof of Theorem~\ref{theorem:reductions}(\ref{bichain-graphs})]
Consider the interpretation $\Psi$ as described above (having parameters $\mathsf{bottom}$ and $\mathsf{last}$). The hereditary closure of $\Psi(\bichain)$ contains the hereditary closure of $\{Y_n \mid n \ge 1\}$. The latter class ($\{Y_n \mid n \ge 1\}\downarrow$) is nothing but $\cl{S}_\alpha$ for $\alpha = (23)^\omega$. We are now done by Theorem~\ref{theorem:word-defined-classes} and Proposition~\ref{prop:reduction-helper}.
\end{proof}

\noindent \textbf{Split permutation graphs.}
Recall that a \emph{split graph} is a graph $G$ whose vertex set can be partitioned into two sets $C$ and $I$ such that $C$ induces a clique in $G$ and $I$ is an independent set in $G$. A \emph{permutation graph} is a graph whose vertices represent the domain of a permutation, and each of whose edges determines an inversion in the permutation. Following~\cite{ABLS15}, we use the following characterization of split permutation graphs.

\begin{prop}[{\cite[Prop.~2.3]{ABLS15}}]\label{prop:split-perm-and-bichain-graphs}
Let $G$ be a split graph given together with a partition of its vertex set into a clique $C$ and an independent set $I$. Let $H$ be the bipartite graph obtained from $G$ by deleting the edges of $C$. Then $G$ is a split permutation graph if, and only if, $H$ is a bichain graph.
\end{prop}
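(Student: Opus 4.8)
The plan is to prove both directions through \emph{permutation diagrams}: a graph is a permutation graph precisely when its vertices can be drawn as points on two parallel horizontal lines, each vertex $u$ joined to its two copies by a segment $s_u$, so that $uv$ is an edge exactly when $s_u$ and $s_v$ cross. I keep the split partition $V(G) = C \cup I$ fixed, with $H$ obtained from $G$ by deleting the edges inside $C$; then $H$ is bipartite with sides $C$ and $I$ and $N_H(v) = N_G(v)$ for every $v \in I$.

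\textbf{($\Rightarrow$)} Suppose $G$ is a permutation graph and fix a permutation diagram. Because $C$ is a clique its segments pairwise cross, so $C$ appears as $c_1, \dots, c_p$ on the top line and as $c_p, \dots, c_1$ on the bottom line; because $I$ is independent its vertices appear in one common order on both lines. For $v \in I$ let $a_v$ be the number of elements of $C$ lying left of $v$ on the top line and $b_v$ the number lying right of $v$ on the bottom line; checking the crossing condition yields $N_H(v) = \{c_1, \dots, c_{\min(a_v,b_v)}\} \cup \{c_{\max(a_v,b_v)+1}, \dots, c_p\}$, an initial segment together with a final segment of $c_1, \dots, c_p$. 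Reading the common order of $I$ off the two lines shows $a_v$ is nondecreasing and $b_v$ nonincreasing along it, hence $a_v - b_v$ is nondecreasing; splitting $I$ at the point where $a$ overtakes $b$ yields two subsets on each of which the sets $N_H(v)$ are nested (in the first the initial and final segments both grow, in the second they both shrink), so the $I$-side of $H$ is a union of two chains. Applying the same analysis to $\overline{G}$ — again a split graph, with the roles of $C$ and $I$ interchanged, still a permutation graph (permutation graphs are closed under complementation), and whose associated bipartite graph is the bipartite complement $\overline{H}$ of $H$ — shows the $C$-side of $\overline{H}$ is a union of two chains; since bipartite complementation merely reverses the neighbourhood-inclusion order on each side, the $C$-side of $H$ is too, so $H$ is a bichain graph.

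\textbf{($\Leftarrow$)} Suppose $H$ is a bichain graph, with $2$-chain decompositions $C = C_1 \cup C_2$ and $I = I_1 \cup I_2$. Lay out $C$ as $C_1$ in \emph{reversed} chain order followed by $C_2$ in chain order, with $C_1$ occupying positions $1, \dots, s$ and $C_2$ positions $s+1, \dots, p$. For any $v \in I$ the set $N_H(v)$ meets each chain of $C$ in an up-set of it, so in this layout $N_H(v)$ is an initial segment $\{1, \dots, \ell_v\}$ together with a final segment $\{r_v, \dots, p\}$ with $\ell_v \le s$ and $r_v \ge s+1$; hence the neighbourhood interval $J_v := [\ell_v, r_v - 1]$ contains position $s$ for every $v$, so any two of these intervals intersect. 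Now ``$I$ is a $2$-chain'' says exactly that the neighbourhood-inclusion poset on $I$ is a union of two chains, i.e.\ its incomparability graph is bipartite; and under this layout that poset coincides with the containment order of the intervals $J_v$, so its incomparability graph is the graph whose edges are the pairs of \emph{staggered} (mutually non-containing) intervals. Fix a proper $2$-colouring; for each $v$ the colour says which endpoint of $J_v$ is $a_v$ and which is $b_v$, so $\min(a_v,b_v) = \ell_v$ and $\max(a_v,b_v) = r_v - 1$. Since the intervals pairwise intersect, a pair $(a_v,b_v), (a_{v'},b_{v'})$ with $a_v < a_{v'}$ and $b_v < b_{v'}$ can only come from two equally-coloured staggered intervals, of which there are none; hence the $v$'s can be ordered so that $a$ is nondecreasing and $b$ nonincreasing. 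Inserting them in that order into the two reversed copies of $c_1, \dots, c_p$ at the positions given by $a_v$ (top line) and $b_v$ (bottom line) produces a permutation diagram realising $G$, so $G$ is a split permutation graph.

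\textbf{Main obstacle.} The forward direction is bookkeeping once one has the parameters $(a_v, b_v)$. The work is in the converse: one must commit to the ``$C_1$ reversed, then $C_2$'' layout, because for a general linear order on $C$ making every $N_H(v)$ an initial-plus-final segment the resulting intervals can be pairwise disjoint, and then $G$ has no permutation diagram at all — so both chains of $C$ must be used, and crossed, to keep every interval straddling the $C_1/C_2$ boundary. The other point needing care is the clean equivalence ``$I$ is a $2$-chain $\iff$ the staggering graph of the $I$-intervals is bipartite'', which is what lets a single $2$-colouring orient all the intervals consistently; together these two observations fuse the two separate bichain hypotheses into one coherent diagram.
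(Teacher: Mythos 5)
Your proof is correct. Note first that the paper does not prove this proposition at all: it is quoted from Atminas et al.~\cite{ABLS15} and used as a black box, so there is no in-paper argument to compare against; what you have written is a self-contained proof of the imported result. The argument via permutation diagrams is sound. In the forward direction, the parametrization of each $v \in I$ by $(a_v, b_v)$, the resulting description of $N_H(v)$ as an initial-plus-final segment of the clique order $c_1,\dots,c_p$, and the monotonicity of $a_v$ and $b_v$ along the common order of $I$ correctly yield the two chains on the $I$-side; passing to $\overline{G}$ to handle the $C$-side is a clean way to avoid a symmetric computation, using only that a union of two chains is preserved under reversal of the neighbourhood-inclusion order. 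The converse is where the content lies, and you have isolated the two essential points: the crossed layout ($C_1$ reversed, then $C_2$) forces every gap interval $J_v$ to straddle the boundary position $s$, which rules out the mixed-colour bad pairs; and the equivalence between ``$I$ is a union of two chains'' and bipartiteness of the staggering graph of the intervals $J_v$ supplies the $2$-colouring that orients each interval into a pair $(a_v,b_v)$ admitting a global order with $a$ nondecreasing and $b$ nonincreasing. One cosmetic slip: neighbourhood inclusion on $I$ corresponds to \emph{reverse} containment of the intervals $J_v$, not containment; since only the incomparability graph is used, this does not affect the argument.
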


\begin{figure}[ht]
    \centering
    \includegraphics[scale=0.75]{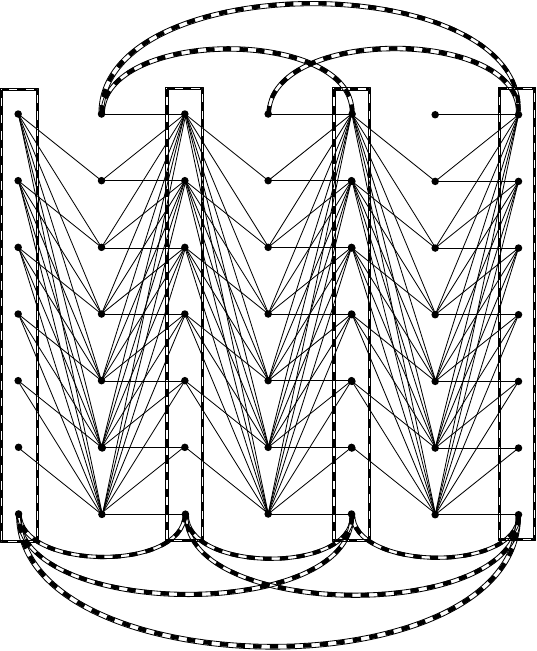}
    \caption{A split permutation graph $G$. The dashed line surrounding any odd column denotes that the vertices of the column form a clique, and a dashed line between two columns denotes that every vertex of one column is adjacent to every vertex of the other column. One sees that the vertices of the even columns form an independent set in $G$, while those of the odd columns form a clique. Deleting the edges in this clique gives us the bichain graph $Z_7$ depicted in Figure~\ref{fig:bichain}.}
    \label{fig:split-perm}
\end{figure}

Let $G$ be a split permutation graph with $(C, I)$ being a partition of its vertex set into a clique $C$ and an independent set $I$. Let $G^*$ be the expansion of $G$ with a unary predicate $P$ which is interpreted as the set $C$. Consider the $\fo$ interpretation $\Psi$ which removes from $G^*$ all edges inside $P$.
It is easy to see that $\Psi(G^*)$ is a bichain graph by Proposition~\ref{prop:split-perm-and-bichain-graphs}.

Let $\Psi$ be the FO interpretation as described above and $\textsf{SP}$ be the class of split permutation graphs. Then $\Psi(\textsf{SP})$, and hence its hereditary closure, contains the class $\bichain$. We are then done by Theorem~\ref{theorem:reductions}(\ref{bichain-graphs}) and Proposition~\ref{prop:reduction-helper}.

\vspace{3pt} \noindent \textbf{Bipartite permutation graphs.}
These graphs  are graphs that are bipartite as well as being permutation graphs. 
For our purposes, the following characterization is useful.
Consider the graph $P_{n}$ on vertex set $\{v_{i,j} \mid 1 \leq i,j \leq n \}$ where the only edges are between $v_{i,j}$ and $v_{i+1,j'}$ for $j' \leq j$\label{bip-perm}.  Then, the class of bipartite permutation graphs is exactly the  hereditary closure of the class $\{P_n \mid n \ge 1\}$~\cite{Lozin11}.  Now, it is easily seen that this class is exactly the class $\cl{S}_\alpha$ as described in Section~\ref{section:word-defined-classes}, for $\alpha = 2^\omega$, and this has been observed in~\cite{CollinsFKLZ18}.  Thus, Theorem~\ref{theorem:reductions}(\ref{BP-graphs}) follows from Theorem~\ref{theorem:word-defined-classes}.

\begin{figure}[ht]
    \centering
    \includegraphics[scale=0.7]{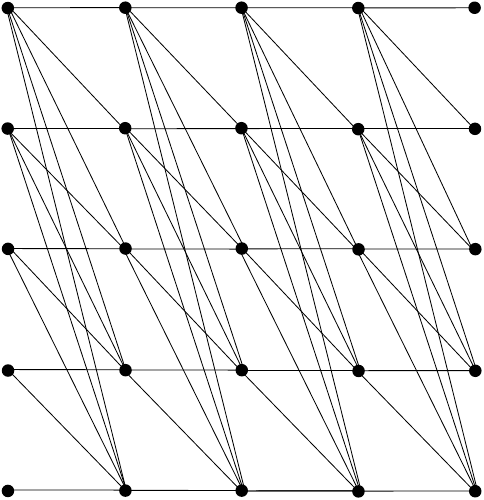}
    \caption{The bipartite permutation graph $P_5$.  In this illustration the vertex $v_{i,j}$ appears in row $j$ and column $i$.}
    \label{fig:bip_perm}
\end{figure}

\renewcommand{\odd}{\ensuremath{\mathsf{odd}}}
\newcommand{\xor}{\ensuremath{\oplus}}

\renewcommand{\odd}[1]{\ensuremath{\mathsf{odd}(#1)}}

\section{Power Graphs}\label{section:power-graphs}

\begin{figure}[ht]
    \centering
    \includegraphics[scale=0.63]{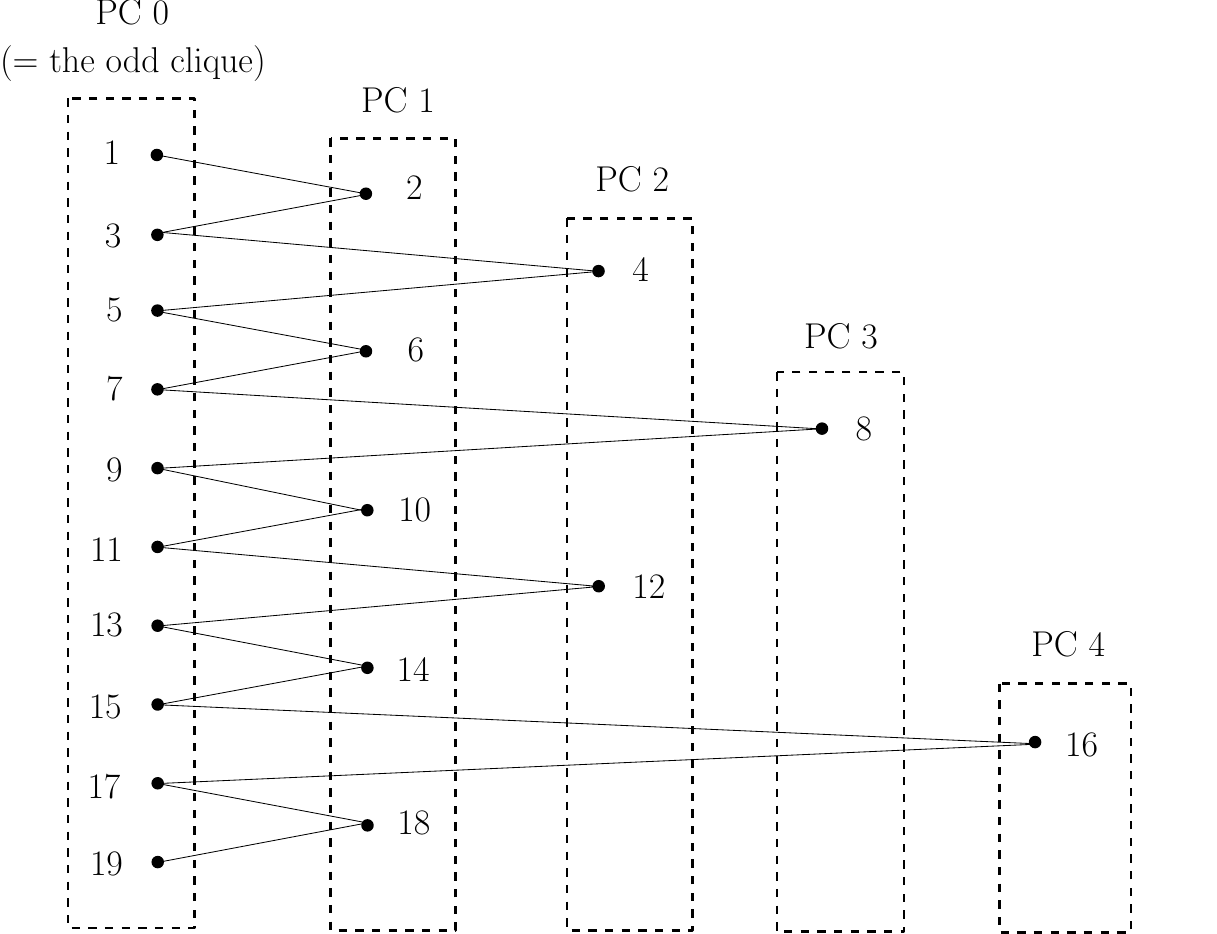}
    \caption{The power graph $D_{19}$ with ``PC $i$'' denoting the power clique corresponding to $i$.}
    \label{fig:power-graphs}
\end{figure}

In this section, we consider the class of \emph{power graphs} as  defined in~\cite{LRZ15} in the context of well-quasi-ordering  and clique-width.
Most of the classes that we have seen so far can be shown to not be well-quasi ordered under the induced subgraph relation.  In particular, all word-defined classes, unit interval graphs and bipartite permutation graphs can be seen to contain the antichain $\{ I_n \mid n \ge~1\}$ described after Proposition~\ref{prop:hereditary}.  We do not know whether  bichain graphs and split permutation graphs are well-quasi ordered, though it has been shown that their expansion with two labels is not a well-quasi ordered class~\cite{ABLS15}.  In contrast, power graphs constitute a class of graphs that is $\hucw$, that \emph{is} well-quasi ordered~\cite{LRZ15} and, as we show, is a minimal $\hucw$ class.  It was introduced precisely to demonstrate an  $\hucw$  class that is well-quasi ordered.  Minimality follows from arguments contained in~\cite{LRZ15}, but was not observed there.  We now define the class of power graphs. We  show that they are minimal and then in the remainder of the section show that they admit interpretability of grids.

For $n \ge 1$, we define the graph $D_n$ as follows.
The vertex set of $D_n$ is $\left[n\right] = \{1, \ldots, n\}$.  For each $i < n$, there is an edge between $i$ and $i+1$---we call these  \emph{path edges}.  Furthermore, there is an edge between $i$ and $j$ if the largest power of 2 that divides $i$ is the same as the largest power of 2 that divides~$j$---we call these \emph{clique edges}.  To understand this terminology, note that we can see~$D_n$ as consisting of a simple path of length $n$, along with, for each $k$ such that $2^k \leq n$, a clique on all vertices $j = 2^k \cdot (2r + 1)$ for some $r \ge 0$---we call this the \emph{power clique} corresponding to $k$.  In particular, taking $k=0$, there is a clique formed by all the odd elements, which we call the \emph{odd clique}.   An example graph is illustrated in Figure~\ref{fig:power-graphs}.  Observe that the path edges, which are the only edges with endpoints in different power cliques always have one end point in the odd clique, and one outside it.
The class of power graphs, denoted {\powergraphs}, is now defined as the hereditary closure of the class $\{ D_n \mid n \ge 1 \}$.

\subsection{Minimality of Power Graphs} \label{subsection:power-graph-minimality}
\begin{prop}\label{prop:minimality-of-power-graphs}
The class {\powergraphs} is a minimal hereditary class of unbounded clique-width.
\end{prop}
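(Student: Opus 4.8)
The statement has two parts. That \powergraphs{} has unbounded clique-width is already established in~\cite{LRZ15}: the graphs $D_n$ have clique-width tending to infinity, and \powergraphs{} is their hereditary closure, so unboundedness is inherited. Hence the content is minimality, which I would obtain by a reduction to a single structural fact about $D_M$-free power graphs. First observe that for $m\le n$ the subgraph of $D_n$ induced on $\{1,\dots,m\}$ is, directly from the definition of its path and clique edges, exactly $D_m$; so $D_m\subseteq D_n$. Consequently a hereditary subclass $\cl{D}\subseteq\powergraphs$ equals \powergraphs{} precisely when $\{\,n\mid D_n\in\cl{D}\,\}$ is infinite. Now let $\cl{D}\subsetneq\powergraphs$ be any proper hereditary subclass and pick $G\in\powergraphs\setminus\cl{D}$; by definition of \powergraphs{} there is an $M$ with $G\subseteq D_M$, and since $\cl{D}$ is hereditary we must have $D_M\notin\cl{D}$ (otherwise $G\in\cl{D}$), and then, again by heredity, no member of $\cl{D}$ contains $D_M$, i.e.\ $\cl{D}\subseteq\cl{E}_M:=\{\,H\in\powergraphs\mid D_M\not\subseteq H\,\}$.

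So minimality follows once we prove the following \emph{Key Lemma}: for every $M$ there is a bound $f(M)$ such that every power graph $H$ with $D_M\not\subseteq H$ has $\cw(H)\le f(M)$; equivalently, every power graph of clique-width exceeding $f(M)$ contains $D_M$ as an induced subgraph. Granting this, let $\cl{D}\subseteq\powergraphs$ be hereditary with unbounded clique-width: for each $M$ it contains a graph of clique-width greater than $f(M)$, which then contains $D_M$ as an induced subgraph, so $D_M\in\cl{D}$ by heredity. Thus $\{\,n\mid D_n\in\cl{D}\,\}$ is infinite and $\cl{D}=\powergraphs$, so no proper hereditary subclass of \powergraphs{} is \hucw{} --- which is exactly minimality.

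It remains to prove the Key Lemma, which is where the real work lies and which I would extract from the structural analysis of~\cite{LRZ15}. Present $H$ as the subgraph of $D_N$ induced on some $S\subseteq\{1,\dots,N\}$ with $N=\max S$. Recall the relevant structure: the power cliques partition $S$ according to the $2$-adic valuation $\nu$ (a clique edge joins $i,j$ iff $\nu(i)=\nu(j)$), their restrictions to $[N]$ have geometrically decaying sizes, and the only edges between distinct power cliques are path edges, each joining a vertex of the odd clique to a vertex of some $C_k$ with $k\ge1$ and contributing at most two such neighbours to the latter. The plan is to sweep the integers $1,2,\dots,N$ in increasing order and assemble $H$ via a bounded-label linear clique-width decomposition, keeping one label for the currently open end of each power clique and each run of consecutive integers that is still ``active''; the number of labels needed is controlled by a combinatorial width parameter of $S$, roughly how many power cliques simultaneously meet a window of $S$ in more than a bounded number of vertices while being interleaved by long runs of consecutive integers. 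The crux, and the main obstacle, is to show that when this parameter is large one can read off an induced copy of $D_M$ in $H$: this is the one step that genuinely uses $D_M$-freeness, and once it is in place the bounded-label sweep and the resulting bound $f(M)$ follow by routine bookkeeping. One could alternatively package the reduction through the fact, also from~\cite{LRZ15}, that \powergraphs{} is well-quasi-ordered under induced subgraphs --- so every proper hereditary subclass is cut out by finitely many forbidden induced subgraphs, each an induced subgraph of some $D_M$ --- but the structural Key Lemma remains the essential ingredient.
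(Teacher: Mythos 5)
Your reduction of minimality to the Key Lemma is correct and is essentially the skeleton of the paper's proof: a proper hereditary subclass excludes some $G\subseteq D_M$, hence excludes $D_M$, hence is contained in the $D_M$-free power graphs, and it suffices to bound the clique-width of that class. (The paper phrases this with a general excluded graph $G$ rather than passing to $D_M$, but that difference is cosmetic.) The issue is that you leave the Key Lemma itself unproven: your sketch gestures at an unspecified ``combinatorial width parameter'' and a label-sweeping construction, and you explicitly flag the crux --- extracting an induced $D_M$ when that parameter is large --- as an open obstacle. As written, the proposal therefore has a genuine gap at exactly the step you identify as carrying all the weight.

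The gap is closable, and the paper closes it by citing two specific results of Lozin, Razgon and Zamaraev~\cite{LRZ15} rather than by a direct construction. Call a \emph{factor} of a power graph the subgraph induced by a maximal run of consecutive integers in its vertex set. Lemma~11 of~\cite{LRZ15} states that for every power graph $G$ there is a $t=t(G)$ such that every factor of $D_n$ of length at least $t$ contains $G$ as an induced subgraph (for $n\ge t$); Theorem~2 of~\cite{LRZ15} states that a power graph whose longest factor has length $t$ has clique-width at most $2(\log t+4)$. Together these give your Key Lemma immediately: if $H\subseteq D_n$ omits $D_M$, then (for $n\ge t(D_M)$; small $n$ is trivial) every factor of $H$ has length less than $t(D_M)$, so $\cw(H)\le 2(\log t(D_M)+4)$. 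The correct intermediate invariant is thus the length of the longest run of consecutive integers present in the vertex set, which is sharper and simpler than the parameter you describe. Your closing remark about well-quasi-ordering is also a red herring for this purpose: WQO yields finitely many forbidden induced subgraphs for any proper hereditary subclass, but says nothing about clique-width, so it cannot substitute for the factor-length argument.
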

That $\powergraphs$ is a hereditary class of unbounded clique-width has already been shown in~\cite{LRZ15}.   Thus, we only need to show that no proper subclass has this property.

Given a graph $G \in \powergraphs$ which is a subgraph of $D_n$, define an \emph{interval} in $G$ to be a set $S \subseteq [n]$ of vertices of $G$ such that if $i,j \in S$ with $i < j$ and $k$ is a vertex of $G$ with $i< k < j$ then $k \in S$.  We call a subgraph of $G$ induced by an interval a \emph{factor} of $G$.  We now recall the following two results proved in~\cite{LRZ15}.

\begin{lemma}[Lemma 11,~\cite{LRZ15}]\label{lemma:induced-subgraph-of-factor}
Let $G$ be a graph in $\powergraphs$. Then there exists an integer $t = t(G)$ such that for any $n \ge t$, every factor of $D_n$ of length at least $t$ contains $G$ as an induced subgraph.
\end{lemma}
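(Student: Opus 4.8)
The plan is to exploit a self-similarity property of the graphs $D_n$: up to isomorphism, the factor of $D_n$ induced by a long enough interval whose left endpoint is $\equiv 1$ modulo a large power of $2$ is exactly $D_m$. Since $\powergraphs$ is by definition the hereditary closure of $\{D_n \mid n \ge 1\}$, the given graph $G$ is an induced subgraph of $D_m$ for some $m$; so it is enough to show that every sufficiently long factor of $D_n$ contains $D_m$ as an induced subgraph, and in fact I would produce this copy on a sub-interval (a factor of the factor).

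First I would record the number-theoretic fact that does all the work. Write $v_2(i)$ for the exponent of the largest power of $2$ dividing $i$, so that the clique edges of $D_n$ are exactly the pairs $\{i,j\}$ with $i \neq j$ and $v_2(i) = v_2(j)$; note also that consecutive integers have different $v_2$ (one of the two is odd), so the edge between $i$ and $i+1$ is always a path edge and never a clique edge. Let $K$ be least with $2^K > m$. For any $c$ with $2^K \mid (c-1)$ and any $r \in \{0, \dots, m-1\}$, write $c + r = (c-1) + (r+1)$; since $v_2(r+1) \le \log_2 m < K \le v_2(c-1)$ (reading $v_2(0) = +\infty$), the two summands have distinct $2$-adic valuations and hence $v_2(c+r) = v_2(r+1)$. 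Therefore $c + r \mapsto r+1$ is a bijection $\{c, \dots, c+m-1\} \to \{1, \dots, m\}$ preserving clique edges (governed by equality of $v_2$) and path edges (governed by consecutiveness), i.e.\ an isomorphism from the factor of $D_n$ induced by $\{c, \dots, c+m-1\}$ onto $D_m$.

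Next I would carry out a trivial counting step: any interval $I \subseteq [n]$ with $|I| \ge 2^K + m - 1$ contains an integer $c \equiv 1 \pmod{2^K}$ together with $c+1, \dots, c+m-1$, hence contains a sub-interval $J$ of the special form above. So I would set $t = t(G) := 2^K + m$. Then, for any $n \ge t$ and any factor $F$ of $D_n$ whose defining interval has length at least $t$, choosing $J$ inside that interval as above, the factor of $D_n$ induced by $J$ is an induced subgraph of $F$ and is isomorphic to $D_m$, which has $G$ as an induced subgraph; hence $G$ is an induced subgraph of $F$.

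I do not foresee a real obstacle: the argument is short once the ``shift by a high power of $2$'' idea is found. The points needing a little care are (i) choosing $2^K$ \emph{strictly} larger than $m$, so the valuation identity $v_2(c+r) = v_2(r+1)$ has no exceptional case; (ii) the observation that consecutive integers never contribute a clique edge, so path and clique edges do not interfere at the two ends of $J$; and (iii) the routine transitivity of the induced-subgraph relation, which lets one pass from the copy of $D_m$ on $J$ to a copy of $G$ inside $F$. If one wanted $t(G)$ to depend only on $|V(G)|$, one could additionally bound the least $m$ with $G \subseteq D_m$ in terms of $|V(G)|$, but this refinement is not needed for the statement as given.
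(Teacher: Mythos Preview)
The paper does not give its own proof of this lemma; it is quoted from the cited reference and used as a black box in the minimality argument that follows. So there is nothing in the paper to compare your argument against directly.

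That said, your argument is correct and is the natural one. The self-similarity observation---that adding a multiple of a sufficiently high power of $2$ leaves the $2$-adic valuation of every integer in $\{1,\dots,m\}$ unchanged, so the factor on $\{c,\dots,c+m-1\}$ is isomorphic to $D_m$ whenever $2^K\mid(c-1)$ with $2^K>m$---is exactly the mechanism behind the lemma, and your pigeonhole step locating such a $c$ inside any interval of length at least $2^K+m-1$ is routine. One minor remark: the caution in your point~(ii) is not actually needed, since the bijection $c+r\mapsto r+1$ preserves path edges and clique edges separately and bijectively, making the question of whether the two edge types could ever coincide immaterial to the isomorphism claim; but the observation is true and does no harm.
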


\begin{theorem}[Theorem 2,~\cite{LRZ15}]\label{theorem:factor-length-and-cwd}
Let $G$ be a graph in $\powergraphs$ such that the length of the longest factor in $G$ is $t$.  Then the clique-width of $G$ is at most $2(\log t + 4)$.
\end{theorem}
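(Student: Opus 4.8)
The plan is to establish the two defining properties separately: that \powergraphs{} has unbounded clique-width, and that it is minimal with this property. Unbounded clique-width is already established in~\cite{LRZ15} (it is part of what it means for the class to be $\hucw$), so the real work lies entirely in minimality, which, as the surrounding text notes, is implicit in the two quoted results but was not stated there. Since a hereditary subclass fails to be $\hucw$ exactly when it has bounded clique-width, to prove minimality I would show directly that every \emph{proper} hereditary subclass $\cl{D} \subsetneq \powergraphs$ has bounded clique-width. Fix such a $\cl{D}$ and pick a graph $G \in \powergraphs \setminus \cl{D}$. Because $\cl{D}$ is hereditary it can contain no graph having $G$ as an induced subgraph; hence every $H \in \cl{D}$ satisfies $G \not\subseteq H$. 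It therefore suffices to bound, uniformly in terms of $G$ alone, the clique-width of every $H \in \powergraphs$ with $G \not\subseteq H$.

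The heart of the argument is the claim that forbidding $G$ forces a bound on the length of the longest factor. Let $t = t(G)$ be the constant supplied by Lemma~\ref{lemma:induced-subgraph-of-factor}, and take any $H \in \powergraphs$ with $G \not\subseteq H$. Fix an embedding of $H$ as an induced subgraph of some $D_N$, so that the vertices of $H$ are genuine integers in $[N]$. Suppose for contradiction that $H$ has a factor $F$ of length at least $t$, i.e.\ $F$ is the subgraph of $H$ induced by a run $\{a, a+1, \ldots, b\}$ of consecutive integers, all lying in $V(H)$, with $b - a + 1 \ge t$. Since these integers are consecutive in $[N]$ as well, the same set is an interval of $D_N$, and because $H$ is induced in $D_N$ the factor $F$ coincides with the corresponding factor of $D_N$. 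Moreover a factor of $D_N$ has at most $N$ vertices, so $N \ge b - a + 1 \ge t$; thus the hypothesis $n \ge t$ of Lemma~\ref{lemma:induced-subgraph-of-factor} is automatically met with $n = N$. That lemma then gives $G \subseteq F$, and since $F \subseteq H$ we obtain $G \subseteq H$, a contradiction. Hence the longest factor of $H$ has length strictly less than $t$.

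It remains to convert the factor bound into a clique-width bound and conclude. By Theorem~\ref{theorem:factor-length-and-cwd}, a graph in \powergraphs{} whose longest factor has length $s$ has clique-width at most $2(\log s + 4)$; applying this with $s < t = t(G)$ shows $\cw(H) \le 2(\log t + 4)$ for every $H \in \cl{D}$, a bound depending only on $G$. Thus $\cl{D}$ has bounded clique-width, and since $\cl{D}$ was an arbitrary proper hereditary subclass, \powergraphs{} is minimal $\hucw$. The main obstacle, and really the only step requiring care, is the identification of a long factor of an arbitrary, possibly non-contiguous, member $H$ with a genuine factor of an ambient $D_N$ satisfying $N \ge t$, so that Lemma~\ref{lemma:induced-subgraph-of-factor} applies; this hinges on reading an ``interval'' as a run of consecutive integers (so that it remains an interval of $D_N$) and on the automatic inequality $N \ge$ (factor length). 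Everything else is a direct combination of Lemma~\ref{lemma:induced-subgraph-of-factor} and Theorem~\ref{theorem:factor-length-and-cwd}.
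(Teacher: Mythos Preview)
Your proposal does not address the stated theorem at all. The statement you were asked to prove is the clique-width bound $\cw(G) \le 2(\log t + 4)$ in terms of the longest factor length $t$; this is Theorem~2 of~\cite{LRZ15}, and the present paper simply \emph{cites} it without proof. A proof of that statement would need to actually construct clique-width expressions for graphs in \powergraphs{} with short factors (or argue via a suitable decomposition), and nothing in your write-up attempts this. Indeed, your argument \emph{invokes} Theorem~\ref{theorem:factor-length-and-cwd} as a black box in its final step, so it cannot possibly be a proof of that theorem.

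What you have written is instead a proof of Proposition~\ref{prop:minimality-of-power-graphs} (minimality of \powergraphs{}). Read as such, it is correct and follows essentially the same route as the paper: excluding some $G$ forces, via Lemma~\ref{lemma:induced-subgraph-of-factor}, a uniform bound on factor lengths, and then Theorem~\ref{theorem:factor-length-and-cwd} converts this into a clique-width bound. Your version is marginally tidier in that you avoid the paper's split $\cl{S} = \cl{S}_1 \cup \cl{S}_2$ by observing directly that a factor of length $\ge t$ in $H \subseteq D_N$ forces $N \ge t$, so Lemma~\ref{lemma:induced-subgraph-of-factor} applies immediately; the paper instead handles the finitely many $D_n$ with $n < t$ separately. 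This is a cosmetic difference only.
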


\begin{proof}[Proof of Proposition~\ref{prop:minimality-of-power-graphs}] Consider a proper hereditary subclass $\cl{S}$ of \linebreak $\powergraphs$; then $\cl{S}$ excludes a graph $G \in \powergraphs$. Let $t = t(G)$ be as given by Lemma~\ref{lemma:induced-subgraph-of-factor}.
Let $\cl{S} = \cl{S}_1 \cup \cl{S}_2$ where $\cl{S}_1 = \cl{S} \cap \{D_n \mid n < t\}\!\!\downarrow$ and $\cl{S}_2 = \cl{S} \cap \{D_n \mid n \ge t\}\!\!\downarrow$. Observe that $\cl{S}_1$ has finitely many graphs up to isomorphism.

We show that for each  $X \in \cl{S}_2$, every factor of $X$ has length $< t$. For otherwise~$X$ has a factor $Y$ of length $\ge t$ and there is $p \ge 1$ such that $X \subseteq~D_p$ and so $Y$ is also a factor of $D_p$.  Hence by Lemma~\ref{lemma:induced-subgraph-of-factor}, we have $G$ is an induced subgraph of $Y$, whereby it is also an induced subgraph of $X$.  Since $\cl{S}$ is hereditary, $G \in \cl{S}$ which is a contradiction.

By Theorem~\ref{theorem:factor-length-and-cwd}, every $X \in \cl{S}_2$ has clique-width $\leq k = 2(\log t + 4)$. Then $\cl{S}_2$ has bounded clique-width, and hence so does $\cl{S}$ since $\cl{S}_1$ is finite.
\end{proof}

\subsection{Interpreting grids in Power Graphs}\label{sec:power-grids}
We now establish the main result of this section, showing that power graphs do not provide a counter-example to Seese's conjecture.

\begin{theorem}\label{theorem:power-graphs}
There exists an $\mso$ interpretation $\Theta$ such that  $\Theta(\powergraphs)$ contains all square grids. 
\end{theorem}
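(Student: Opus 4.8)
The plan is to reduce, via Lemma~\ref{lemma:reduction-helper}, to the task of exhibiting a single $\mso$ interpretation $\Psi$ (with parameters) together with a family of expanded graphs $H_n$ whose underlying graphs lie in $\powergraphs$, such that $\Psi(H_n)$ contains the $n \times n$ grid as an induced subgraph; since the parameters may be quantified away existentially, this shows that the hereditary closure of $\Psi(\powergraphs)$ contains all square grids, and Lemma~\ref{lemma:reduction-helper} (with $\cl{C}$ the class of square grids, $\cl{D} = \powergraphs$ and $\Xi = \Psi$) then yields Theorem~\ref{theorem:power-graphs}. For the family, I would fix $n$, take $N = N(n)$ to be a suitably large power of $2$ (of order $2^{cn}$ for a fixed constant $c$, which is the right order of magnitude since an $n\times n$ grid has clique-width $n+1$ while $D_N$ has clique-width $O(\log N)$ by Theorem~\ref{theorem:factor-length-and-cwd}), and let $H_n$ be the expansion of $D_N$ by a fixed tuple of unary predicates marking: the ``boundary'' vertices to be excluded from the domain; the odd clique; the parity of the $2$-adic valuation (the ``level'') of a vertex; and markers analogous to $\mathsf{first}$, $\mathsf{last}$, $\mathsf{top}$, $\mathsf{bottom}$ used in Sections~\ref{section:word-defined-classes}--\ref{section:reductions}. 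Since the underlying graph of $H_n$ is $D_N \in \powergraphs$, working with such expansions is legitimate.

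The heart of the argument is a collection of auxiliary $\mso$ formulae over $H_n$, culminating---exactly as in the proofs of Theorems~\ref{theorem:word-defined-classes} and~\ref{theorem:reductions}(\ref{bichain-graphs})---in binary predicates $\hedge{x, y}$ and $\vedge{x, y}$ on a definable subdomain whose symmetric closure is isomorphic to the edge relation of a uniform subdivision of an $n\times n$ grid (equivalently, of the upper triangular grid $U_{2n}$, which contains the $n\times n$ grid). The ingredients I expect to need are: (i) a formula separating path edges from clique edges---an edge is a path edge exactly when its endpoints have different levels, and clique edges join vertices of equal level, and with the level-parity and odd-clique markers this distinction becomes $\mso$-expressible; (ii) the linear order along the path, obtained as the reflexive-transitive closure of the path-successor relation, which is $\mso$-definable; (iii) for each level $k$, a linear order on the level-$k$ power clique recovered from the positions along the path of the odd-clique vertices joined to its members by path edges; and (iv) using (ii) and (iii), a ``horizontal'' adjacency $\hedge{x, y}$ tracking consecutive vertices within a fixed level and a ``vertical'' adjacency $\vedge{x, y}$ tracking the passage from a vertex to the vertex of the next level lying ``directly below'' it, this last relation being read off from the path edges incident to the odd clique. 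One then sets $\Delta = (\Delta_V, \Delta_E)$ with $\Delta_V(x) := \domain{x}$ and $\Delta_E(x, y) := \hedge{x, y} \vee \hedge{y, x} \vee \vedge{x, y} \vee \vedge{y, x}$, checks that $\Delta(H_n)$ is a uniform subdivision of the grid, and post-composes with the parameterless interpretation $\Gamma$ of Section~\ref{section:word-defined-classes} that contracts uniform subdivisions, obtaining $\Psi = \Gamma \circ \Delta$.

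The step I expect to be the main obstacle is ingredient (iv): defining the vertical adjacency $\vedge{x, y}$. Unlike the word-defined classes and bichain graphs, where the columns of the ambient universal graph are laid out in a simple left-to-right fashion and the relevant ``next column'' relation is essentially one application of the edge relation, in $D_N$ the second coordinate is encoded only through the recursive, self-similar pattern of $2$-adic valuations along the path (the ruler sequence $v_2(1), v_2(2), v_2(3), \dots$), and the equal-level clique edges reveal merely that two vertices share a level, not the distance between them. The crux is therefore to pin down, in a uniformly $\mso$-definable way, which vertex ``sits directly below'' a given one in the intended grid---for instance by exploiting that within a block of $2^{k}$ consecutive integers the level pattern is a scaled copy of the pattern within a block of $2^{k-1}$, so that the structure between two consecutive levels is witnessed locally by a bounded constellation of path and clique edges around the odd clique, from which the ``below'' relation can be extracted. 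Once $\hedge{x, y}$ and $\vedge{x, y}$ are correctly defined, verifying that $\Delta(H_n)$ is a uniform subdivision of the $n \times n$ grid is a routine, if somewhat lengthy, case analysis entirely parallel to the verifications in Sections~\ref{section:word-defined-classes} and~\ref{section:reductions}, and the final assembly of $\Psi$ and the appeal to Lemma~\ref{lemma:reduction-helper} are immediate.
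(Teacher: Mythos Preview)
Your proposal is plausible but takes a genuinely different route from the paper, and in doing so you make the problem harder for yourself than necessary.

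The paper does \emph{not} interpret grids directly in $\powergraphs$. Instead it builds a parameterless interpretation $\Phi$ such that $\Phi(D_n)$ contains, as an induced subgraph, the bipartite permutation graph $P_k$ (for $k$ roughly $\log n$); since the bipartite permutation graphs are exactly $\cl{S}_\alpha$ for $\alpha = 2^\omega$, Theorem~\ref{theorem:reductions}(\ref{BP-graphs}) (ultimately Theorem~\ref{theorem:word-defined-classes}) and Lemma~\ref{lemma:reduction-helper} then finish the job. Concretely, the paper first shows that $\odd{x}$ is \emph{FO-definable without parameters} (via a small $4$-vertex configuration that can only occur in the odd clique), which immediately yields definable path-versus-clique edges, the path linear order, and a linear order on the set of power cliques. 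The single binary relation $\mathsf{forward}(x,y)$, meaning ``$y$ is one clique-level above $x$ and $x \le y$ on the path'', already gives the edge relation of a bipartite permutation graph on a suitable induced subset; no ``vertical grid edge'' ever needs to be defined inside $D_n$.

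By contrast, your plan keeps the entire burden of Section~\ref{section:word-defined-classes} inside the power-graph argument. The obstacle you single out---defining a between-level ``directly below'' relation---is real, and your self-similarity sketch could be made to work (for instance, on the vertex set $v_{i,j} = j\cdot 2^{k+1} + 2^{i+1}$ the between-level neighbour of $x$ is the path-least $y > x$ with strictly higher clique level, which is $\mso$-definable), but the paper simply sidesteps it by pushing that work into the already-established bipartite-permutation machinery. Two incidental points: your use of parameters for the odd clique and level parity is harmless but unnecessary, since the paper recovers all of this from the graph alone; and your $\mathsf{V}\text{-}\mathsf{edge}$/$\mathsf{H}\text{-}\mathsf{edge}$ terminology is swapped relative to the paper's implicit picture, where levels play the role of columns and the within-clique successor is the easy direction.
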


We show Theorem~\ref{theorem:power-graphs} by showing that there exists an $\mso$ interpretation~$\Phi$ such that the hereditary closure of $\Phi(\powergraphs)$ contains all bipartite permutation graphs. We are then done by Theorem~\ref{theorem:reductions} and Proposition~\ref{prop:reduction-helper}.  Indeed, it suffices to show that we can interpret grids in a subset of $\powergraphs$, and we do this for the set $\{D_n \mid n \in \mathbb{N} \text{ even and } n > 9 \}$. We first describe the overall ideas involved in the construction of $\Phi$, and provide the details subsequently.
 

We first show that there exists an FO formula
$\odd{x}$ such that if $x$ is a number in $D_n$ with $n \ge 9$, then
$\odd{x}$ is true if, and only if, $x$ is an odd number. With this formula at hand, we can distinguish path edges from clique edges.  Indeed, an edge is a path edge if, and only if, it has exactly one end point that is odd.  In $D_n$, the path edges form a  simple path of length $n-1$ and, if $n$ is even, then only one of the two end points satisfies $\odd{x}$.  This allows us to give this simple path an orientation: for each path edge $(x,x+1)$ we can identify the direction $x \rightarrow x+1$.  The transitive closure of this relation (which is definable in $\mso$), gives us a definition of the natural linear order on $D_n$.

Once we have defined a linear order $\leq$ on $D_n$, this induces a linear order on the power cliques: namely, a clique $C$ is below $C'$ if the $\leq$-minimal element of $C$ is less than the $\leq$-minimal element of $C'$.  Indeed, we can also define a successor relation on cliques from this.  From these, we define a relation that relates a pair $x$ and $y$ precisely if $y$ occurs after $x$ in the linear order $\leq$ and occurs in the power clique that is successor to the power clique containing $x$.  It is easy to see that the graph induced by this relation contains arbitrarily large bipartite permutation graphs $P_k$ as defined on page~\pageref{bip-perm}.  
  
We now give the details of the construction described above. In addition to $\odd{x}$, we need a number of auxiliary predicates along the way. 

\newcommand{\clique}{\textsf{clique}}
\newcommand{\pathedge}{\textsf{pathedge}}
\renewcommand{\xor}{\oplus}
\newcommand{\mybetween}{\textsf{between}}

\begin{enumerate}
\item 
We first define the FO formula $\odd{x}$.
\[
\begin{array}{lll}
\odd{x} & := & \exists y \exists z \exists w \big( ``x, y, z, w
~\text{form a 4-clique except for the}~ z-w~\text{edge}"\big)\\
\end{array}
\]

It is easy to see that for $n \ge 9$, all odd numbers in $D_n$ satisfy $\odd{x}$.  If $x$ is odd with $x < n-3$, this is witnessed by $y = x + 2$, $w = x + 4$ and $z = x + 1$, otherwise by $y = x - 2$, $w = x - 4$ and $z = x - 1$.

To show that the even numbers of $D_n$ do not satisfy $\odd{x}$, first observe that in any power clique other than the odd clique, since the numbers in the clique are of the form $2^k \cdot (2r + 1)$ for fixed $k$, the difference between any two numbers in the clique is at least $2^{k+1}$, which is at least $4$ since $k \ge 1$.  Suppose now that $x$ is an even number in $D_n$ and $x,y,z$ form a $3$-clique.  We argue that any $w$ that is adjacent to both $x$ and $y$ must also be adjacent to $z$ showing that $\odd{x}$ is not satisfied.  Consider the two cases:

\begin{itemize}
\item The edge between $x$ and $y$ is a clique edge.  Then $|x - y| \ge 4$.  If $z$ is in a different power clique, then $|x - z| = 1$ and $|z - y| = 1$, whereby $|x - y| \leq 2$ -- a contradiction.  Thus $z$ is in the same power clique as~$x$ and $y$.  By the same argument, $w$ is the same power clique as $x$ and~$y$, so there is a clique edge $z-w$.

\item The edge between $x$ and $y$ is a path edge and so $|x - y| = 1$. Then the edges from $z$ to $x$ and $y$ cannot both be path edges, as you cannot have a triangle of such edges.  So, one of them is a clique edge. If $z$ is in the same power clique as $x$, then $|x - z| \ge 4$ and $|y - z| = 1$, which is a contradiction, so $z$ must be in the same power clique as $y$.  By the same argument, $w$ is in the same clique as $y$, so there is a clique edge $z-w$.
\end{itemize}

\begin{remark}
The formula $\odd{x}$ is central to our construction below and we  assume henceforth that $n \ge 9$.
\end{remark}

\item 

$\clique(x, y)$ and $\pathedge(x, y)$: The formula
$\clique(x, y)$ is true of the pair $(x, y)$ in $D_n$ if, and only if, $x$ and $y$
are in the same power clique.  The formula $\pathedge(x, y)$ is true
if, and only if, $|x - y| = 1$. 
\[
\begin{array}{lll}
\pathedge(x, y) & := & E(x, y) \wedge (\odd{x} \xor \odd{y})\\
\clique(x, y) & := & E(x, y) \wedge \neg \pathedge(x,y)
\end{array}
\]

\item $\path{P, x, y}$: This predicate is true of all
triples $(P, x, y)$ for an $\mso$ variable $P$ and $x, y \in D_n$ if $P$ is the (unique) path between $x$ and $y$, whose edges are all path edges. Below $\exists! w$ denotes ``there is a unique $w$ such that...''.
\[
\begin{array}{lll}
  \path{P, x, y} & := & (P(x) \wedge P(y) \bigwedge \\
  & & \exists! w (P(w) \wedge \pathedge(x, w)) \bigwedge\\
  & & \exists! w (P(w) \wedge \pathedge(y, w)) \bigwedge\\
  & & \forall w \big((P(w) \wedge w \neq x \wedge w \neq y) \rightarrow \\
  & & ~~~~~~\exists u \exists v (P(u) \wedge P(v) \wedge \pathedge(u, w) ~\wedge\\
  & & ~~~~~~~~~~~~~~\pathedge(v, w) \wedge u \neq v)\big)\\
\end{array}
\]

\item $\mybetween(x, y, z)$: This predicate is true of all
triples $(x, y, z)$ in $D_n$ such that $y$ appears somewhere along the (unique) path between $x$ and $z$ ($y$ could be one of $x$ or $z$).
\[
\begin{array}{lll}
  \mybetween(x, y, z) & := & \exists P (\path{P, x, z} \wedge P(y))\\
\end{array}
\]
\end{enumerate}

We now make a few observations about $D_n$.  Note that since the path edges are definable, and they form a simple path from $1$ to $n$, the only possible automorphisms are the trivial one and the map that reverses the order, in particular mapping $n$ to $1$.  Moreover, since the odd numbers are definable, for the order reversing map to be an automorphism, $n$ must be odd.  We can say more: a more careful analysis shows that the order reversal preserves all power cliques if, and only if, $n=2^k-1$ for some $k$.  However, for our purposes it suffices to note that whenever $n$ is even $D_n$ has no non-trivial automorphisms.  The predicates we define next are for \emph{even $n > 9$}.

\renewcommand{\one}{\textsf{one}}
\newcommand{\mysucc}{\mathsf{succ}}
\newcommand{\cliquemin}{\textsf{cliquemin}}
\newcommand{\cliqueminlinord}{\textsf{cliqueord}}
\newcommand{\cliqueminsucc}{\textsf{cliquemin-succ}}

\begin{enumerate}
\setcounter{enumi}{4}
    
\item $\one(x)$: This predicate is satisfied by $x$ in $D_n$
if, and only if, $x = 1$.  It defines the unique (when $n$ is even) odd element that has only one path edge incident on it.
\[
\begin{array}{lll}
  \one(x) & := & \odd{x} \wedge \neg \exists z_1 \exists z_2 (\pathedge(x, z_1) \wedge \pathedge(x, z_2) \wedge z_1 \neq z_2) 
\end{array}
\]
This now allows us to orient the path edges to obtain the natural successor relation on $D_n$.
\item $\mysucc(x, y)$: This predicate is satisfied by $x, y$ in $D_n$ if, and only if, $y = x+1$.
\[
\begin{array}{lll}
  \mysucc(x,y) & := & \pathedge(x,y) \wedge  \exists z ( \one(z) \wedge \mybetween(z,x,y))
\end{array}
\]
As usual, we can then define in $\mso$ a formula $\linord{x,y}$ which defines the reflexive and transitive closure of $\mysucc$.

\item $\cliquemin(x)$: This predicate is true of $x$ in $D_n$
if, and only if, $x$ is the minimum element of its power clique (i.e. $x = 2^k$
for some $k \ge 0$).
\[
\begin{array}{lll}
  \cliquemin(x) & := &  \forall y  (\clique(x, y) \rightarrow \linord{x,y}). \\
\end{array}
\]

The linear order defined by $\mathsf{linord}$ then allows us to linearly order the power cliques.

\item $\cliqueminlinord(x, y)$: This predicate is true of the
pair $(x, y)$ in $D_n$ if, and only if, it is the case that the minimum element in the power clique of $x$ is less than the minimum element in the power clique of $y$.
\[
\begin{array}{lll@{}l}
  \cliqueminlinord(x, y) & := & 
  \exists z_1 z_2 & (\cliquemin(z_1) \wedge \cliquemin(z_2) \wedge \clique(x,z_1) \\ 
  & & &  \wedge ~\clique(y,z_2) \wedge \linord{z_1,z_2})
\end{array}
\]

This ordering of the power cliques and the fact that $\mathsf{linord}$ linearly orders each clique gives us sufficient structure to define arbitrarily large grids.  To see this concretely, consider the following relation.

\item $\cliqueminsucc(x, y)$: This predicate is true if $x$ is in the power clique corresponding to $k$ and $y$ in the power clique corresponding to $k+1$ for some $k$.  
\[
\begin{array}{lll}
  \cliqueminsucc(x, y) & := & \neg \clique (x,y) \wedge \cliqueminlinord(x, y) \wedge \\
  & & \forall z (\cliqueminlinord(x, z) \rightarrow \\
  & & ~~~~~~(\clique(z,y) \vee \cliqueminlinord(y, z)))\\
\end{array}
\]

\item Consider now the relation $\mathsf{forward}(x,y)$ defined by
\[
\begin{array}{lll}
  \mathsf{forward}(x,y) & := & \cliqueminsucc(x,y) \wedge \linord{x,y}.
\end{array}
\]
This relates an element $x$ in the power clique corresponding to $k$ to all elements of the power clique corresponding to $k+1$ that are greater than~$x$.
\end{enumerate}

Then, the interpretation $\Phi = (\Phi_V(x), \Phi_E(x, y))$  given by 
\[
\begin{array}{lll}
    \Phi_V(x) & := & \true  \\
    \Phi_E(x, y) & := & \mathsf{forward}(x, y) \vee  \mathsf{forward}(y,x)   \\
\end{array} 
\]
maps $D_n$ to a graph whose edge relation is the symmetric closure of $\mathsf{forward}$.  We claim that the graph $\Phi(D_n)$ contains a large bipartite permutation graph as an induced subgraph.  To see this, choose the largest value $k$ such that the power clique corresponding to $k$ contains at least $k$ elements in $D_n$ (in other words $2^k(2k-1) \leq n$).  Consider the subgraph of $D_n$ induced by the set of vertices $\{v_{i,j} \mid 0\leq i,j \leq k-1\}$ where $v_{i,j} = j\cdot 2^{k+1} + 2^{i+1}$.  Each $v_{i,j}$ is then in the power clique corresponding to $i+1$ and it is easily checked that there is an edge between $v_{i,j}$ and $v_{i',j'}$ in $\Phi(D_n)$ precisely when $i' = i+1$ and $j \leq j'$.

\begin{proof}[Proof of Theorem~\ref{theorem:power-graphs}]
  As established above, the graph $\Phi(D_n)$ contains an induced subgraph isomorphic to the bipartite permutation graph $P_k$ as long as $2^k(2k-1) \leq n$, and $n$ is even and at least $9$.
  Then the hereditary closure of $\Phi(\powergraphs)$ contains all bipartite permutation graphs, whereby, by Theorem~\ref{theorem:reductions} and Proposition~\ref{prop:reduction-helper}, we are done.
\end{proof}

\section{Conclusion}\label{section:conclusion}
The study of monadic second-order logic on graphs has attracted great attention in recent years.  An important aspect of work on this logic is to identify classes of graphs on which $\mso$ is well behaved.  Seese's conjecture is an important focus of this classification effort.  In its stronger form it offers a dichotomy: any class of graphs is either interpretable in trees and therefore has bounded clique-width and is well-behaved \emph{or} it interprets arbitrarily large grids and its $\mso$ theory is then undecidable.  

We show that Seese's conjecture could be established by considering two kinds of graph classes: minimal hereditary classes of unbounded clique-width and antichains of unbounded clique-width.  Showing that all such classes interpret unbounded grids would suffice.  While we do not have a complete taxonomy of such classes, we investigated all the ones known and showed that none of them provides a counter-example to Seese's conjecture.

We know of only two explicit constructions of antichains of unbounded clique-width: the one presented in this paper and the one due to Korpelainen~\cite{Korpelainen16}.  Both are explicitly based on grids and easily admit an interpretation of arbitrarily large grids.  On the other hand, there is a richer landscape of known minimal HUCW classes and we explore this systematically.

One could weaken the strong conjecture by requiring only that the classes of unbounded clique-width admit \emph{$\mso$ transductions} of grids, rather than interpretations (see~\cite{courcelle-engelfriet} for a discussion of transductions).  This would still suffice to establish Seese's conjecture.  In all the cases we consider, however, we establish the stronger form, i.e.\ an interpretation of grids.

It is also worth pointing out that for many of the classes we consider, the original proofs that they have unbounded clique-width require sophisticated bespoke arguments.  The interpretation of grids in the classes also provides a uniform method of proving that they have unbounded clique-width.

As a final remark, it is worth noting that there are standard graph operations which allow us to construct new minimal $\hucw$ graph classes from the ones we have.  For example, taking the graph complement of all graphs in a class $\cl{C}$ yields a class that is also minimal $\hucw$ if $\cl{C}$ is.  Since this operation is itself an $\mso$ interpretation, the results about interpreting arbitrarily large grids apply to the resulting classes as well.

\section*{Acknowledgements}
Research supported by the Leverhulme Trust through a Research Project Grant on ``Logical
Fractals"

\bibliography{refs}

\end{document}